\documentclass[a4paper]{article}
\usepackage{a4wide}
\usepackage[UKenglish]{babel}
\usepackage{graphicx,subfigure,caption}
\usepackage{algorithm,algorithmic}
\graphicspath{{./}{images/}}
\usepackage[colorlinks=true,linkcolor=blue,citecolor=red]{hyperref}
\usepackage{xcolor}
\usepackage{amssymb,amsthm,empheq,bbold}
\usepackage[inline]{enumitem}
\usepackage{calrsfs}\DeclareMathAlphabet{\pazocal}{OMS}{zplm}{m}{n}
\usepackage{authblk}

\DeclarePairedDelimiter\abs{\lvert}{\rvert}
\newcommand{\C}{\mathbb{C}}
\newcommand{\Lip}{\operatorname{Lip}}
\newcommand{\cM}{\mathfrak{M}}
\newcommand{\N}{\mathbb{N}}
\newcommand{\normHm}[2]{\Vert#1\Vert_{H^{#2}(\R_+)}}
\newcommand{\norminf}[1]{\Vert#1\Vert_\infty}
\newcommand{\cP}{\mathcal{P}}
\newcommand{\Prob}{\mathbb{P}}

\newcommand{\R}{\mathbb{R}}
\newcommand{\cS}{\mathfrak{S}}
\newcommand{\supp}[1]{\operatorname{supp}#1}
\newcommand{\cU}{\pazocal{U}}
\newcommand{\pX}{\pazocal{X}}

\newtheorem{assumption}{Assumption}[section]
\newtheorem{proposition}[assumption]{Proposition}
\newtheorem{theorem}[assumption]{Theorem}
\newtheorem{lemma}[assumption]{Lemma}
\newtheorem{corollary}[assumption]{Corollary}
\theoremstyle{remark}\newtheorem{remark}[assumption]{Remark}
\allowdisplaybreaks

\title{Linear inelastic kinetic equations modelling the spread of fake news and its interplay with personal awareness}
\author{Martina Fraia}
\author{Nadia Loy}
\author{Andrea Tosin}
\affil{{\small Department of Mathematical Sciences ``G. L. Lagrange'', Politecnico di Torino, Italy}}
\date{}
		
\begin{document}
\maketitle
	
\begin{abstract}
In this paper, we introduce a kinetic model which describes a learning process leading individuals to build personal awareness about fake news. Next, we embed the results of this model into another kinetic model, which describes the popularity gained by news on social media conditioned to the reliability of the disseminated information. Both models are formulated in terms of linear inelastic Boltzmann-type equations, of which we investigate the main analytical properties -- existence and uniqueness of solutions, trend to equilibrium, identification of the equilibrium distributions -- by employing extensively Fourier methods for kinetic equations. We also provide evidence of the analytical results by means of Monte Carlo numerical simulations.

\medskip

\noindent{\bf Keywords:} Linear Boltzmann-type equations, kinetic modelling in sociophysics, learning processes, popularity on social media

\medskip

\noindent{\bf Mathematics Subject Classification:} 35Q20, 35Q70, 35Q91, 91D30
\end{abstract}
	
\section{Introduction}

In an age dominated by digital communication and social media platforms, the spread of fake news has become a challenging social issue. False information, often designed to manipulate emotions or reinforce biases, can spread rapidly, influencing public opinion, behaviour, and even decision-making processes. Although much research has focussed on identifying and fighting fake news, less attention has been paid to understanding how individual awareness and cognitive abilities evolve after exposure to such contents. This paper aims to model the impact of fake news on personal awareness, particularly in terms of the ability of individuals to discern credible information from misinformation and to avoid spreading misinformation further.

In the literature, some mathematical models have been proposed to investigate the spread of fake news taking inspiration from the spread of infectious diseases. This pioneering idea appeared, probably for the first time, in~\cite{daley1964NATURE}; then, in recent years, it has been rediscovered thanks to the popularity gained by epidemiological models, see e.g.,~\cite{castiello2023ALGORITHMS,piqueira2020PHYSA}. Those models describe the diffusion of fake news like the contagion of an infectious disease, partitioning the society in a certain number of compartments such as: the oblivious individuals, i.e., those who are unaware of fake news; the individuals exposed to fake news; the fake news spreaders, i.e., the analogous of the infectious individuals; and the individuals who refrain from spreading fake news, virtually the analogous of the recovered individuals in an epidemiological context. A further elaboration has consisted in introducing the individual \textit{competence} as a variable structuring the models, cf. e.g.,~\cite{franceschi2022PTRSA,franceschi2022PDEA}, building on the kinetic description of personal conviction/knowledge originally introduced in~\cite{brugna2015PRE,pareschi2014PTRSA}. Typically, the competence evolves through interactions among the individuals or with a background and is assumed to affect the switch rates of the individuals across the compartments. On a closely related topic, we also mention models investigating the social impact of opinion formation processes in connection with individual prevention choices during epidemics, see e.g.,~\cite{franceschi2023PLOSONE,zanella2023BMB}.

In our case, instead, we do not consider a compartmental model but a genuinely kinetic model of a multi-agent society, whose microscopic state is the \textit{awareness} of the individuals, namely their ability to detect (possibly partly) fake news. The awareness may change in consequence of the interplay of the individuals with various pieces of information, according to a learning process which leads individuals to either increase or decrease their awareness about fake news. For its part, news acts as a background, that we model in detail by means of the statistical distribution of the news \textit{reliability}. Indeed, news may not only be completely true or false, as malevolent spreaders often combine true and false information to enhance the perceived credibility, thereby making the contents more compelling and likely to be accepted by the audience. Moreover, we imagine that the individual learning is affected also by the social context through the mean awareness of the population. Interestingly, although we do not assume a compartmentalisation of the society based on the inclination of the individuals to spread fake news, we recover this characteristic from our model as an emergent property of the system. In addition to this, taking inspiration from~\cite{toscani2018PRE}, we also analyse how individuals with benevolent intentions, when exposed to such contents, may, based on their awareness, contribute or not to the spread of possible misinformation on social media.

In more detail, the paper is organised as follows: Section~\ref{sect:awareness} introduces the kinetic model describing the awareness formation about fake news as a result of the learning process fostered by the interplay among individuals and news. The model is analysed qualitatively, whereby existence, uniqueness, and trend to equilibrium of its solutions are established. In addition to this, the unique attractive equilibrium profile of awareness, showing compartmentalisation as an emergent property, is found explicitly and illustrated through selected numerical tests. The section ends by proposing a conceptual way to infer the reliability of news and its probability distribution, which play a major role in the model definition, from real data. Section~\ref{sect:popularity} proposes a kinetic model addressing the evolution of the popularity of contents on social media based on the distribution of the awareness about fake news in the society. Also in this case, the model is analysed qualitatively and its asymptotic trends are investigated. A characterisation of the unique attractive equilibrium profile of popularity, conditioned to the reliability of the considered content, is provided in terms of its Fourier transform. Moreover, some features of this profile, such as its lower order statistical moments and the slimness or fatness of its tail, are obtained explicitly and shown numerically in connection with the connectivity distribution of the users of the social media. Finally, Section~\ref{sect:conclusions} draws some conclusions and briefly sketches possible research perspectives.

\section{Awareness evolution by learning}
\label{sect:awareness}
We consider a large population of individuals characterised by their ability to discern fake news, which in this paper we call \textit{awareness} and denote by a scalar variable $x\in [0,\,1]$. Specifically, $x=0$ stands for a null awareness, viz. a complete inability to detect fake news; whereas $x=1$ stands for a full awareness, viz. a (mostly ideal) full ability to unmask fake news. Parallelly, we describe the \textit{reliability} of news by another scalar variable $y\in [0,\,1]$, where $y=0$ stands for completely false information and $y=1$ for fully reliable information.

In the sequel, it will sometimes be useful to refer to the awareness and to the reliability of news as two random variables, denoted $X$ and $Y$, respectively, of which the $x$'s and the $y$'s are the realisations.

Let $f=f(x,t):[0,\,1]\times [0,\,+\infty)\to\R_+$ be the distribution of the awareness of the individuals at time $t$ and $g\in L^1(0,\,1)$, $g\geq 0$ a.e., the probability density function of the reliability of news, which we understand as prescribed and constant in time. In the sequel, it will be customary to consider both $f(\cdot,t)$ and $g$ defined on the whole $\R$ but with
\begin{equation}
	\supp{f(\cdot,t)}\subseteq [0,\,1] \quad \forall\,t\geq 0, \qquad \supp{g}\subseteq [0,\,1].
	\label{eq:supp}
\end{equation}
We further assume the following normalisation condition:
$$ \int_0^1f(x,t)\,dx=1 \quad \forall\,t\geq 0, $$
so that we may regard also $f(\cdot,t)$ as a probability density function. Furthermore, we denote by
$$ m_X(t):=\int_0^1xf(x,t)\,dx $$
the mean awareness of the population at time $t$.

To describe the evolution of the individual awareness against fake news in consequence of simple learning dynamics, we imagine that each interaction between an individual with level of awareness $x$ and a piece of information with reliability $y$ produces the post-interaction awareness
\begin{equation}
	x'=x+\alpha\lambda(x,y,t),
	\label{eq:x'}
\end{equation}
where $\alpha\in (0,\,1]$ is a given parameter and $\lambda:[0,\,1]^2\times [0,\,+\infty)\to [0,\,1]$ is the following learning function:
\begin{equation}
	\lambda(x,y,t)=
		\begin{cases}
			-x & \text{if } m_X(t)\leq y \\
			1-x & \text{if } m_X(t)>y.
		\end{cases}
	\label{eq:lambda}
\end{equation}
The meaning of the interaction rule~\eqref{eq:x'}-\eqref{eq:lambda} is clear:
\begin{itemize}
\item if the social context features a collective awareness $m_X$ lower than the reliability $y$ of news then $x'=(1-\alpha)x$. Individuals are not stimulated to increase their awareness against fake news and may instead lose part of it as a consequence of the addiction to fake news produced by the social context;
\item on the contrary, if the social context features a collective awareness $m_X$ higher than the reliability $y$ of news then $x'=x+\alpha(1-x)$. In this case, individuals are induced to increase their personal awareness against fake news by the social context, because the latter is, on the whole, sufficiently aware of them.
\end{itemize}
We remark that with~\eqref{eq:lambda} we compare the reliability of news with the mean awareness $m_X$ of the population and not e.g., with the individual awareness $x$. This allows us to take into account in a simple way the influence of the social context in the learning process.

The following consistency check is in order:
\begin{proposition} \label{prop:x'}
The interaction rule~\eqref{eq:x'}-\eqref{eq:lambda} with $\alpha\in (0,\,1]$ is physically admissible, namely $x'\in [0,\,1]$ for all $x,\,y\in [0,\,1]$ and all $t\geq 0$.
\end{proposition}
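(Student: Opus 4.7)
The plan is to verify the claim by a direct case analysis based on the piecewise definition of $\lambda$ in~\eqref{eq:lambda}, since the sign and magnitude of the update $\alpha\lambda(x,y,t)$ is determined entirely by which branch is active. Nothing more delicate is needed: the assumption $\alpha\in(0,1]$ is exactly what makes each branch a convex-combination-type update.

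First I would treat the case $m_X(t)\leq y$. There $\lambda(x,y,t)=-x$, so $x'=x-\alpha x=(1-\alpha)x$. Since $\alpha\in(0,1]$ implies $1-\alpha\in[0,1)$, and $x\in[0,1]$, one immediately gets $x'\in[0,x]\subseteq[0,1]$. Next I would treat the case $m_X(t)>y$. There $\lambda(x,y,t)=1-x$, so $x'=x+\alpha(1-x)=(1-\alpha)x+\alpha$, which is a convex combination of $x$ and $1$ with weights $1-\alpha$ and $\alpha$ both in $[0,1]$. Hence $x'\in[x,1]\subseteq[0,1]$.

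Combining the two cases yields $x'\in[0,1]$ for every $x,y\in[0,1]$ and every $t\geq 0$, regardless of the value of $m_X(t)$, which proves the claim. There is no real obstacle here; the only thing to remark is that the conclusion would fail for $\alpha>1$, since in the first branch one would get $x'=(1-\alpha)x<0$ whenever $x>0$, and in the second branch $x'=(1-\alpha)x+\alpha>1$ whenever $x<1$. This is precisely what motivates the hypothesis $\alpha\in(0,1]$ in the statement.
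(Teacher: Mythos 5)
Your proof is correct and follows essentially the same argument as the paper: the paper simply unifies your two cases into the single formula $x'=(1-\alpha)x+\alpha\chi(m_X(t)>y)$ and reads off the bounds $(1-\alpha)x\leq x'\leq(1-\alpha)x+\alpha\leq 1$, which is exactly your convex-combination observation in compact form. Your closing remark on the necessity of $\alpha\leq 1$ is accurate, though not needed for the statement.
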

\begin{proof}
Writing $x'=(1-\alpha)x+\alpha\chi(m_X(t)>y)$, where $\chi(\cdot)$ is the characteristic function of the event in parenthesis, we see that $(1-\alpha)x\leq x'\leq (1-\alpha)x+\alpha$. Since $(1-\alpha)x\geq 0$ and $(1-\alpha)x+\alpha\leq 1-\alpha+\alpha=1$, the thesis follows.
\end{proof}

Proposition~\ref{prop:x'} ensures that the property~\eqref{eq:supp} of $\supp{f(\cdot,t)}$ is met for all $t>0$ provided it is at $t=0$.

\subsection{Kinetic description and trend to equilibrium}
\label{sect:kinetic_description.x}
By appealing to standard techniques, see e.g.,~\cite{pareschi2013BOOK} for details, it is possible to describe the time evolution of the probability density function $f$ subject to the interaction rule~\eqref{eq:x'}-\eqref{eq:lambda} by means of a Boltzmann-type kinetic equation. Since rule~\eqref{eq:x'}-\eqref{eq:lambda} is, in general, neither smooth nor invertible, it is convenient to refer to the weak form of the equation, which writes
\begin{equation}
	\frac{d}{dt}\int_0^1\varphi(x)f(x,t)\,dx=\int_0^1\int_0^1(\varphi(x')-\varphi(x))f(x,t)g(y)\,dx\,dy,
	\label{eq:Boltzmann.x}
\end{equation}
where $\varphi:[0,\,1]\to\C$ is an arbitrary observable quantity (test function) and $x'$ is given by~\eqref{eq:x'}. Owing to Proposition~\ref{prop:x'}, the term $\varphi(x')$ is well defined for every observable quantity $\varphi$.

Let
$$ G(y):=\int_{-\infty}^yg(u)\,du $$
be the cumulative distribution function of the reliability of news. Clearly, $G$ is non-decreasing; since $g\in L^1(0,\,1)$, it is also continuous. If, furthermore, $g\in L^1(0,\,1)\cap L^\infty(0,\,1)$ then $G$ is in particular Lipschitz continuous. In the sequel, we will invariably make this assumption and denote by $\Lip(G)>0$ the Lipschitz constant of $G$.

Choosing $\varphi(x)=x$ in~\eqref{eq:Boltzmann.x}, we obtain the following evolution equation for the mean awareness:
\begin{align}
	\begin{aligned}[b]
		\dot{m}_X &= \alpha\int_0^1\left(\int_0^1\lambda(x,y,t)g(y)\,dy\right)f(x,t)\,dx \\
		&= \alpha\int_0^1\left(-\int_{m_X}^1xg(y)\,dy+\int_0^{m_X}(1-x)g(y)\,dy\right)f(x,t)\,dx \\
		&= \alpha\left(G(m_X)-m_X\right),
	\end{aligned}
	\label{eq:mX}
\end{align}
where in the last passage we have used that $G(0)=0$, $G(1)=1$. 

If $G(y)=y$ in $[0,\,1]$, i.e. if $Y\sim\cU([0,\,1])$, then the mean awareness is conserved in time because the right-hand side of~\eqref{eq:mX} vanishes. We postpone this particular case to Subsection~\ref{sect:Y.unif} and, for the moment, we assume in general $G(y)\not\equiv y$ in $[0,\,1]$.

If $G$ does not coincide with the identity function in $[0,\,1]$ then, due to its monotonicity, there exists at most one $\bar{y}\in (0,\,1)$ such that $G(\bar{y})=\bar{y}$, hence there exists at most one equilibrium 
$$ m_X^\infty:=\bar{y}\in (0,\,1) $$
of~\eqref{eq:mX}. Notice that $0$, $1$ are instead always equilibria of~\eqref{eq:mX} for every $G$ because of the general properties of a cumulative distribution function recalled above. Depending on whether $0$, $m_X^\infty$, $1$ are stable and attractive equilibria, they represent the possible asymptotic values of the mean awareness emerging from the interactions between individuals and news. The stability and attractiveness of such equilibria may be ascertained by discussing the sign of the right-hand side of~\eqref{eq:mX}.

Specifically, assume that $\bar{y}\in (0,\,1)$ exists. Then, it is not difficult to conclude that:
\begin{enumerate}[label=(\roman*)]
\item if
\begin{equation}
	\begin{cases}
		G(y)>y & \text{for } y\in (0,\,\bar{y}) \\
		G(y)<y & \text{for } y\in (\bar{y},\,1)
	\end{cases}
	\label{eq:GY.stability}
\end{equation}
then $m_X^\infty$ is a stable and attractive equilibrium of~\eqref{eq:mX}, hence
$$\lim_{t\to +\infty}m_X(t)=m_X^\infty, \qquad \forall\,m_X^0\neq 0,\,1, $$
where we denote $m_X^0:=m_X(0)$;
\item if instead
\begin{equation*}
	\begin{cases}
		G(y)<y & \text{for } y\in (0,\,\bar{y}) \\
		G(y)>y & \text{for } y\in (\bar{y},\,1)
	\end{cases}
\end{equation*}
then $m_X^\infty$ is an unstable equilibrium of~\eqref{eq:mX} and
\begin{equation*}
	\lim_{t\to +\infty}m_X(t)=
		\begin{cases}
			0 & \text{if } m_X^0<m_X^\infty \\
			1 & \text{if } m_X^0>m_X^\infty.
		\end{cases}
\end{equation*}
\end{enumerate}
On the contrary, if $\bar{y}$ does not exist then
\begin{equation*}
	\lim_{t\to +\infty}m_X(t)=
		\begin{cases}
			0 & \text{if } G(y)<y,\quad \forall\,y\in (0,\,1)\ \text{and } m_X^0\neq 1 \\
			1 & \text{if } G(y)>y,\quad \forall\,y\in (0,\,1)\ \text{and } m_X^0\neq 0.
		\end{cases}
\end{equation*}

Summarising, we have proved that the mean awareness reaches an asymptotic value, which, depending on the statistical distribution of fake news, may be either
\begin{enumerate*}[label=(\roman*)]
\item settle on an intermediate value $0<m_X^\infty<1$ dictated by the cumulative distribution function $G$ of the reliability of news; or
\item coincide with the extreme values $0$, $1$, which represent a stylised description of a corrupted or virtuous polarisation of the collective awareness against fake news.
\end{enumerate*}
If $Y\sim\cU([0,\,1])$, i.e., if the reliability of news is uniformly distributed, then the mean awareness is conserved in time, thus $m_X(t)=m_X^0$ for all $t>0$.

The next result provides an estimate of continuous dependence of the mean awareness, which will be useful in the sequel.
\begin{lemma}
\label{lemma:mX}
Let $m_{X,1}(t),\,m_{X,2}(t)$ be solutions to~\eqref{eq:mX} in the interval $(T,\,+\infty)$, issuing from initial values $m_{X,1}(T),\,m_{X,2}(T)\in [0,\,1]$, respectively, where $T\geq 0$ is arbitrary. Then
$$ \abs{m_{X,2}(t)-m_{X,1}(t)}\leq\abs{m_{X,2}(T)-m_{X,1}(T)}e^{\alpha\left(\Lip(G)-1\right)(t-T)}, \quad t>T. $$
\end{lemma}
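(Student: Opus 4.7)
The plan is to set $d(t) := m_{X,2}(t) - m_{X,1}(t)$ and derive a differential inequality governing $\abs{d}$ (or, more conveniently, $d^2$), from which the estimate follows by an application of Grönwall's lemma.

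More precisely, since both $m_{X,1}$ and $m_{X,2}$ satisfy the ODE~\eqref{eq:mX} on $(T,\,+\infty)$, I would subtract the two equations to obtain
\begin{equation*}
    \dot{d}(t) = \alpha\bigl(G(m_{X,2}(t))-G(m_{X,1}(t))\bigr) - \alpha\,d(t).
\end{equation*}
To avoid issues with the non-smoothness of $t\mapsto\abs{d(t)}$ at zeros, I would multiply by $d$ and work with $d^2$, writing
\begin{equation*}
    \frac{1}{2}\frac{d}{dt}d(t)^2 = \alpha\,d(t)\bigl(G(m_{X,2}(t))-G(m_{X,1}(t))\bigr)-\alpha\,d(t)^2.
\end{equation*}
Invoking the Lipschitz continuity of $G$ (guaranteed by $g\in L^1(0,\,1)\cap L^\infty(0,\,1)$, as noted just after~\eqref{eq:Boltzmann.x}) together with $\abs{ab}\leq\abs{a}\abs{b}$, I would bound the first term on the right by $\alpha\Lip(G)\,d(t)^2$, thereby reaching
\begin{equation*}
    \frac{d}{dt}d(t)^2 \leq 2\alpha\bigl(\Lip(G)-1\bigr)d(t)^2.
\end{equation*}

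Grönwall's inequality applied on $[T,\,t]$ then yields $d(t)^2\leq d(T)^2 e^{2\alpha(\Lip(G)-1)(t-T)}$, and taking square roots produces the claimed estimate. I do not foresee any real obstacle: both equilibria $m_{X,i}$ remain in $[0,\,1]$ (which guarantees that $G$'s Lipschitz bound applies on the relevant range), and the only mildly delicate point is the passage from $\frac{d}{dt}d^2$ to $\frac{d}{dt}\abs{d}$, which is precisely why I prefer to close the Grönwall argument at the level of $d^2$ and take the square root only at the end.
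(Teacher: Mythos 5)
Your proposal is correct and takes essentially the same route as the paper: both proofs subtract the two copies of~\eqref{eq:mX}, invoke the Lipschitz continuity of $G$, and close with Gr\"onwall's inequality, arriving at the same sharp exponent $\alpha\left(\Lip(G)-1\right)$. The only difference is technical bookkeeping around the non-smoothness of $t\mapsto\abs{m_{X,2}(t)-m_{X,1}(t)}$: the paper sidesteps it by multiplying by the integrating factor $e^{\alpha t}$, integrating, and applying the integral form of Gr\"onwall, whereas you square the difference and use the differential form (valid also when $\Lip(G)<1$, i.e. with a negative rate), both of which are equally legitimate.
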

\begin{proof}
From~\eqref{eq:mX}, taking the difference between the equations satisfied by $m_{X,1}$, $m_{X,2}$, we get
$$ \frac{d}{dt}(m_{X,2}-m_{X,1})+\alpha(m_{X,2}-m_{X,1})=\alpha\left(G(m_{X,2})-G(m_{X,1})\right), $$
whence, multiplying both sides by $e^{\alpha t}$ and integrating in time over the interval $[T,\,t]$, $t>T$,
\begin{align*}
	e^{\alpha t}\left(m_{X,2}(t)-m_{X,1}(t)\right) &= e^{\alpha T}\left(m_{X,2}(T)-m_{X,1}(T)\right) \\
	&\phantom{=} +\alpha\int_T^te^{\alpha\tau}\bigl(G(m_{X,2}(\tau))-G(m_{X,1}(\tau))\bigr)\,d\tau.
\end{align*}
The Lipschitz continuity of $G$ allows us to deduce
\begin{align*}
	\abs{e^{\alpha t}\left(m_{X,2}(t)-m_{X,1}(t)\right)} &\leq e^{\alpha T}\abs{m_{X,2}(T)-m_{X,1}(T)} \\
	&\phantom{=} +\alpha\Lip(G)\int_T^t\abs{e^{\alpha\tau}(m_{X,2}(\tau)-m_{X,1}(\tau))}\,d\tau
\end{align*}
and Gr\"{o}nwall's inequality implies
$$ \abs{e^{\alpha t}\left(m_{X,2}(t)-m_{X,1}(t)\right)}\leq\abs{m_{X,2}(T)-m_{X,1}(T)}e^{\alpha T}e^{\alpha\Lip(G)(t-T)}, $$
whence the thesis follows.
\end{proof}

The behaviour of $m_X$ is at the basis of the trend to equilibrium of the solutions to~\eqref{eq:Boltzmann.x}, hence of the identification of the asymptotic distributions\footnote{The so-called \textit{Maxwellians} in the jargon of the classical kinetic theory.} of the awareness against fake news. To approach this issue, we begin by investigating the trend of the energy of the awareness distribution:
$$ E_X(t):=\int_0^1x^2f(x,t)\,dx; $$
in particular, plugging $\varphi(x)=x^2$ into~\eqref{eq:Boltzmann.x} we discover
\begin{equation}
	\dot{E}_X=\alpha(2-\alpha)\left(\frac{2(1-\alpha)m_X+\alpha}{2-\alpha}G(m_X)-E_X\right).
	\label{eq:EX}
\end{equation}
Using Lemma~\ref{lemma:mX}, it is not difficult to see that when the equilibrium $m_X^\infty\in (0,\,1)$ of~\eqref{eq:mX} exists and is asymptotically stable, i.e., when $G$ satisfies~\eqref{eq:GY.stability}, the convergence of $m_X$ to $m_X^\infty$ is exponentially fast in time. Consequently, also $\frac{2(1-\alpha)m_X+\alpha}{2-\alpha}G(m_X)$ converges to $\frac{2(1-\alpha)m_X^\infty+\alpha}{2-\alpha}G(m_X^\infty)=\frac{2(1-\alpha)m_X^\infty+\alpha}{2-\alpha}m_X^\infty$ exponentially fast, because
$$ \abs*{\frac{2(1-\alpha)m_X+\alpha}{2-\alpha}G(m_X)-\frac{2(1-\alpha)m_X^\infty+\alpha}{2-\alpha}G(m_X^\infty)}\leq \left(\Lip(G)+\frac{2(1-\alpha)}{2-\alpha}\right)\abs{m_X-m_X^\infty}, $$
and in such a case it is known that the asymptotic behaviour of the solution to~\eqref{eq:EX} is
$$ \lim_{t\to +\infty}E_X(t)=\frac{2(1-\alpha)m_X^\infty+\alpha}{2-\alpha}m_X^\infty=:E_X^\infty. $$
Notice that $(m_X^\infty)^2<E_X^\infty\leq m_X^\infty$, therefore the internal energy, i.e., the variance of distribution $f$, is in general non-zero asymptotically. This implies that the solutions to~\eqref{eq:Boltzmann.x} possibly evolve towards non-trivial equilibrium distributions. The only trivial cases are when $m_X\to 0,\,1$, for then also $E_X\to 0,\,1$ and the asymptotic distribution is either $\delta_0$ or $\delta_1$, where $\delta_a$ denotes the Dirac distribution centred at $x=a\in\R$.

For a more detailed investigation of the trend to equilibrium we introduce now some tools from measure theory, which have proved to be particularly effective in the analysis of kinetic equations.

Let $\cP([0,\,1])$ be the set of probability measures $\mu$ supported in $[0,\,1]\subset\R$. We denote by
$$ \hat{\mu}(\xi):=\int_\R e^{-i\xi x}\,d\mu(x) $$
the Fourier transform of any such $\mu$, which is a bounded and continuous function on $\R$. Here and henceforth, $i$ denotes the imaginary unit. Since $\supp{\mu}\subseteq [0,\,1]$, we may equivalently write
$$ \hat{\mu}(\xi)=\int_{[0,\,1]}e^{-i\xi x}\,d\mu(x). $$
Given $\mu,\,\nu\in\cP([0,\,1])$, we define their Fourier distance $d_s$, $s\in (0,\,1]$, as
$$ d_s(\mu,\nu):=\sup_{\xi\in\R\setminus\{0\}}\frac{\abs{\hat{\mu}(\xi)-\hat{\nu}(\xi)}}{\abs{\xi}^s}. $$
For a thorough review of Fourier metrics and their properties in connection with kinetic equations we refer the interested reader to~\cite{carrillo2007RMUP}; see also~\cite{auricchio2020RLMA,spiga2004AML}. For our purposes, we mention that the Fourier distance $d_s$, $s>0$, between any two probability measures is finite provided the statistical moments of the two measures coincide up to the order $[s]$, i.e. the integer part of $s$, if $s\in\R\setminus\N$ and up to the order $s-1$ if $s\in\N$. In our case, we confine ourselves to $s\leq 1$ because, in general, the statistical moments of the solutions to~\eqref{eq:Boltzmann.x} differ from the order $1$ onwards. The reason is that, except when $Y\sim\cU([0,\,1])$, the mean awareness $m_X$ is not conserved by the learning dynamics~\eqref{eq:x'}-\eqref{eq:lambda}.

In correspondence of any prescibed initial datum $f^0\in\cP([0,\,1])$,~\eqref{eq:Boltzmann.x} admits a unique global solution $f(t)=f(\cdot,t)\in\cP([0,\,1])$, $t>0$. We prove precisely this statement in Theorem~\ref{theo:exist_uniq} of Appendix~\ref{sect:proof}. Here, building on this result, we move to the investigation of the large time trend of the solutions:

\begin{theorem}
\label{theo:ds}
Assume that $G\in C^1(0,\,1)$ satisfies~\eqref{eq:GY.stability}, hence that there exists a unique asymptotically stable equilibrium $m_X^\infty$ of~\eqref{eq:mX} in $(0,\,1)$. If $\alpha\in (0,\,1]$ is such that
\begin{equation}
	\frac{(1-\alpha)^s}{\alpha}<1-g(m_X^\infty)
	\label{ass:alpha}
\end{equation}
for some $s\in (0,\,1]$ then any two solutions\footnote{Issuing e.g., from two different initial conditions.} $f_1(t),\,f_2(t)\in\cP([0,\,1])$, $t>0$, to~\eqref{eq:Boltzmann.x} with initial means $m_{X,1}^0,\,m_{X,2}^0\in (0,\,1)$ are such that
$$ \lim_{t\to +\infty}d_s(f_1(t),f_2(t))=0. $$
\end{theorem}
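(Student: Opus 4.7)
The plan is to work in Fourier variables and derive a linear integral inequality for the Fourier metric $t\mapsto d_s(f_1(t),f_2(t))$, whose decay can then be read off by a Duhamel--Gr\"onwall argument. First, I insert $\varphi(x)=e^{-i\xi x}$ into the weak form \eqref{eq:Boltzmann.x}; the piecewise structure of \eqref{eq:x'}-\eqref{eq:lambda} (i.e., $x'=(1-\alpha)x$ for $y\geq m_X(t)$ and $x'=(1-\alpha)x+\alpha$ otherwise) lets the $y$-integral against $g$ decouple explicitly, producing
\[
	\dot{\hat f}(\xi,t)+\hat f(\xi,t)=\hat f((1-\alpha)\xi,t)\,\hat P(\xi,t),\qquad \hat P(\xi,t):=1-G(m_X(t))+G(m_X(t))e^{-i\alpha\xi},
\]
with $|\hat P(\xi,t)|\leq 1$ uniformly. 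Writing $D:=\hat f_1-\hat f_2$ and indexing the Fourier factor by the corresponding trajectory, the difference satisfies
\[
	\dot D(\xi,t)+D(\xi,t)=\hat P_1(\xi,t)\,D((1-\alpha)\xi,t)+(\hat P_1-\hat P_2)(\xi,t)\,\hat f_2((1-\alpha)\xi,t),
\]
and the algebraic identity $\hat P_1-\hat P_2=(G(m_{X,1})-G(m_{X,2}))(e^{-i\alpha\xi}-1)$, which vanishes at $\xi=0$, furnishes the scale-invariant bound $\abs{\hat P_1-\hat P_2}/\abs{\xi}^s\leq 2^{1-s}\alpha^s\Lip(G)\abs{m_{X,1}(t)-m_{X,2}(t)}$.

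I then apply Duhamel's formula to $D$, divide by $\abs{\xi}^s$, use $\abs{D((1-\alpha)\xi,t)}/\abs{\xi}^s\leq(1-\alpha)^s d_s(f_1(t),f_2(t))$ together with $|\hat P_1|,\,|\hat f_2|\leq 1$, and take the supremum over $\xi$. For $t\geq t_0$ this yields the integral inequality
\[
	d_s(f_1(t),f_2(t))\leq d_s(f_1(t_0),f_2(t_0))e^{-(t-t_0)}+(1-\alpha)^s\!\int_{t_0}^t\!e^{-(t-\tau)}d_s(f_1(\tau),f_2(\tau))\,d\tau+C\!\int_{t_0}^t\!e^{-(t-\tau)}\abs{m_{X,1}(\tau)-m_{X,2}(\tau)}\,d\tau
\]
with $C=2^{1-s}\alpha^s\Lip(G)$. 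The homogeneous part contracts at rate $1-(1-\alpha)^s>0$, so the real task is to control the forcing $\abs{m_{X,1}-m_{X,2}}$.

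For this I would sharpen Lemma \ref{lemma:mX}, whose global estimate $e^{\alpha(\Lip(G)-1)(t-t_0)}$ may actually grow. Since $G\in C^1$ and $m_{X,k}(t)\to m_X^\infty$ by \eqref{eq:GY.stability}, for every small $\epsilon>0$ one can pick $t_0$ so large that both trajectories live in a neighbourhood of $m_X^\infty$ where $G$'s local Lipschitz constant is at most $g(m_X^\infty)+\epsilon<1$; rerunning the Gr\"onwall step of Lemma \ref{lemma:mX} with this local constant yields $\abs{m_{X,1}(t)-m_{X,2}(t)}\leq\abs{m_{X,1}(t_0)-m_{X,2}(t_0)}\,e^{-\alpha(1-g(m_X^\infty)-\epsilon)(t-t_0)}$. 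Plugging this decaying forcing into the integral inequality and closing via a mild-form Gr\"onwall lemma produces a bound of the type $d_s(f_1(t),f_2(t))\lesssim e^{-(1-(1-\alpha)^s)(t-t_0)}+e^{-\alpha(1-g(m_X^\infty)-\epsilon)(t-t_0)}$; assumption \eqref{ass:alpha}, rewritten as $(1-\alpha)^s<\alpha(1-g(m_X^\infty))$, is exactly what reconciles the Fourier-contraction factor $(1-\alpha)^s$ with the stability rate $\alpha(1-g(m_X^\infty))$ so that the Gr\"onwall closure cleanly delivers $d_s(f_1(t),f_2(t))\to 0$. I expect the main technical obstacle to lie in this local sharpening of Lemma \ref{lemma:mX}: without exploiting the $C^1$-regularity of $G$ at $m_X^\infty$ the forcing would only be estimated by the potentially growing exponential of the lemma and the Fourier contraction alone could not close the argument.
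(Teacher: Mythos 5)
Your proof is correct and essentially coincides with the paper's: it uses the same Fourier-transformed equation (your $\hat P$ is the paper's $H$ in \eqref{eq:H}), the same splitting of the difference into a contraction term plus a forcing bounded by $2^{1-s}\alpha^s\Lip(G)\abs{m_{X,1}-m_{X,2}}$, the same Duhamel--Gr\"onwall closure with contraction factor $(1-\alpha)^s$, and the same final localisation, since the paper likewise replaces $\Lip(G)$ by the local constant $g(m_X^\infty)+\epsilon$ after restarting the estimates at a large time $T$ beyond which both means lie near $m_X^\infty$ --- so your ``sharpening'' of Lemma~\ref{lemma:mX} is exactly what the paper does in situ rather than as a separate lemma. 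The only minor imprecision is in your displayed final bound: after the Gr\"onwall closure the second decay exponent is $\alpha\bigl(1-g(m_X^\infty)-\epsilon\bigr)-(1-\alpha)^s$ rather than $\alpha\bigl(1-g(m_X^\infty)-\epsilon\bigr)$, which is positive precisely under \eqref{ass:alpha}, as you yourself observe.
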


\begin{remark}
Owing to~\eqref{eq:GY.stability}, $g(m_X^\infty)=G'(m_X^\infty)\leq 1$. In particular, the $\alpha$'s fulfilling~\eqref{ass:alpha} form a non-empty subset of $(0,\,1]$ whenever $g(m_X^\infty)<1$, for then $1-g(m_X^\infty)>0$. Notice that the less tight condition on $\alpha$ is obtained with $s=1$.
\end{remark}

\begin{proof}[Proof of Theorem~\ref{theo:ds}]
Taking $\varphi(x)=e^{-i\xi x}$ in~\eqref{eq:Boltzmann.x} and invoking~\eqref{eq:x'},~\eqref{eq:lambda} yields, after some algebraic manipulations,
\begin{align*}
	\partial_t\hat{f}(\xi,t) &= \int_0^1\left(\int_{m_X}^1e^{-i\xi(1-\alpha)x}g(y)\,dy+\int_0^{m_X}e^{-i\xi(\alpha+(1-\alpha)x)}g(y)\,dy\right)f(x,t)\,dx-\hat{f}(\xi,t) \\
	&= \left(1-G(m_X)\right)\hat{f}((1-\alpha)\xi,t)+e^{-i\alpha\xi}G(m_X)\hat{f}((1-\alpha)\xi,t)-\hat{f}(\xi,t),
\end{align*}
where we have used $G(0)=0$, $G(1)=1$. Upon setting
\begin{equation}
	H(m_X,\xi):=1+\left(e^{-i\alpha\xi}-1\right)G(m_X),
	\label{eq:H}
\end{equation}
we rewrite this equation compactly as
$$ \partial_t\hat{f}=H(m_X,\xi)\hat{f}((1-\alpha)\xi,t)-\hat{f}. $$
This gives the time evolution of the Fourier transform of a generic solution to~\eqref{eq:Boltzmann.x}. We now apply it to $f_1$, $f_2$ and take the difference to obtain
$$ \partial_t(\hat{f}_1-\hat{f}_2)=H(m_{X,1},\xi)\hat{f}_1((1-\alpha)\xi,t)-H(m_{X,2},\xi)\hat{f}_2((1-\alpha)\xi,t)-(\hat{f}_1-\hat{f}_2). $$
Dividing both sides by $\abs{\xi}^s$ we further get
$$ \partial_t\frac{\hat{f}_1-\hat{f}_2}{\abs{\xi}^s}=\frac{H(m_{X,1},\xi)\hat{f}_1((1-\alpha)\xi,t)-H(m_{X,2},\xi)\hat{f}_2((1-\alpha)\xi,t)}{\abs{\xi}^s}-\frac{\hat{f}_1-\hat{f}_2}{\abs{\xi}^s}, $$
i.e., setting $h(\xi,t):=\frac{\hat{f}_1(\xi,t)-\hat{f}_2(\xi,t)}{\abs{\xi}^s}$ for brevity,
$$ \partial_th+h=\frac{H(m_{X,1},\xi)\hat{f}_1((1-\alpha)\xi,t)-H(m_{X,2},\xi)\hat{f}_2((1-\alpha)\xi,t)}{\abs{\xi}^s}. $$
Multiplying both sides by $e^t$ and integrating in time over an interval of the form $[T,\,t]$, where $T>0$ is fixed and $t>T$, yields
$$ e^th(\xi,t)=e^Th(\xi,T)+\int_T^te^\tau\frac{H(m_{X,1},\xi)\hat{f}_1((1-\alpha)\xi,\tau)-H(m_{X,2},\xi)\hat{f}_2((1-\alpha)\xi,\tau)}{\abs{\xi}^s}\,d\tau, $$
and further
\begin{equation}
	\abs{e^th(\xi,t)}\leq\abs{e^Th(\xi,T)}+\int_T^te^\tau\frac{\abs{H(m_{X,1},\xi)\hat{f}_1((1-\alpha)\xi,\tau)-H(m_{X,2},\xi)\hat{f}_2((1-\alpha)\xi,\tau)}}{\abs{\xi}^s}\,d\tau.
	\label{eq:|et.h|}
\end{equation}

Now, observe that
\begin{align*}
	&\frac{\abs{H(m_{X,1},\xi)\hat{f}_1((1-\alpha)\xi,t)-H(m_{X,2},\xi)\hat{f}_2((1-\alpha)\xi,t)}}{\abs{\xi}^s}\leq \\
	&\hspace{4cm} \leq\abs{H(m_{X,1},\xi)}\cdot\frac{\abs{\hat{f}_1((1-\alpha)\xi,t)-\hat{f}_2((1-\alpha)\xi,t)}}{\abs{\xi}^s} \\
	&\hspace{4cm}\phantom{\leq} +\frac{\abs{H(m_{X,1},\xi)-H(m_{X,2},\xi)}}{\abs{\xi}^s}\cdot\abs{\hat{f}_2((1-\alpha)\xi,t)}
\end{align*}
and that $\abs{H(m_X,\xi)}\leq 1-G(m_X)+G(m_X)=1$, while
$$ \abs{\hat{f}_2((1-\alpha)\xi,t)}=\abs*{\int_0^1f_2(x,t)e^{-i(1-\alpha)\xi x}\,dx}\leq\int_0^1f_2(x,t)\,dx=1, $$
hence
\begin{multline*}
	\frac{\abs{H(m_{X,1},\xi)\hat{f}_1((1-\alpha)\xi,t)-H(m_{X,2},\xi)\hat{f}_2((1-\alpha)\xi,t)}}{\abs{\xi}^s}\leq \\
	\leq \frac{\abs{\hat{f}_1((1-\alpha)\xi,t)-\hat{f}_2((1-\alpha)\xi,t)}}{\abs{\xi}^s}+\frac{\abs{H(m_{X,1},\xi)-H(m_{X,2},\xi)}}{\abs{\xi}^s}.
\end{multline*}

Let us examine more closely the second term at the right-hand side. Recalling the definition~\eqref{eq:H} of $H$ and using the Lipschitz continuity of $G$ in $[0,\,1]$, we have
\begin{align*}
	\frac{\abs{H(m_{X,1},\xi)-H(m_{X,2},\xi)}}{\abs{\xi}^s} &= \abs{G(m_{X,1})-G(m_{X,2})}\cdot\frac{\abs{e^{-i\alpha\xi}-1}}{\abs{\xi}^s} \\
	&\leq \Lip(G)\abs{m_{X,1}-m_{X,2}}\cdot\frac{\abs{e^{-i\alpha\xi}-1}}{\abs{\xi}^s}.
\end{align*}
Moreover,
$$ \frac{\abs{e^{-i\alpha\xi}-1}}{\abs{\xi}^s}=\sqrt{2\frac{1-\cos{(\alpha\xi)}}{\abs{\xi}^{2s}}}\leq 2^{1-s}\alpha^s $$
for every $\xi\in\R$, therefore we conclude
\begin{align*}
	\frac{\abs{H(m_{X,1},\xi)\hat{f}_1((1-\alpha)\xi,t)-H(m_{X,2},\xi)\hat{f}_2((1-\alpha)\xi,t)}}{\abs{\xi}^s} &\leq \frac{\abs{\hat{f}_1((1-\alpha)\xi,t)-\hat{f}_2((1-\alpha)\xi,t)}}{\abs{\xi}^s} \\
	&\phantom{\leq} +2^{1-s}\alpha^s\Lip(G)\abs{m_{X,1}-m_{X,2}}.
\end{align*}

Back to~\eqref{eq:|et.h|}, this implies
\begin{align*}
	\abs{e^th(\xi,t)} &\leq \abs{e^Th(\xi,T)}+2^{1-s}\alpha^s\Lip(G)\int_T^te^\tau\abs{m_{X,1}(\tau)-m_{X,2}(\tau)}\,d\tau \\
	&\phantom{\leq} +\int_T^te^\tau\frac{\abs{\hat{f}_1((1-\alpha)\xi,\tau)-\hat{f}_2((1-\alpha)\xi,\tau)}}{\abs{\xi}^s}\,d\tau.
\end{align*}
Notice that
$$ d_s(f_1(t),f_2(t))=\sup_{\xi\in\R\setminus\{0\}}\abs{h(\xi,t)}=:\norminf{h(t)} $$
and that
\begin{align*}
	\sup_{\xi\in\R\setminus\{0\}}\frac{\abs{\hat{f}_1((1-\alpha)\xi,t)-\hat{f}_2((1-\alpha)\xi,t)}}{\abs{\xi}^s} &=
		(1-\alpha)^s\sup_{\xi\in\R\setminus\{0\}}\frac{\abs{\hat{f}_1((1-\alpha)\xi,t)-\hat{f}_2((1-\alpha)\xi,t)}}{\abs{(1-\alpha)\xi}^s} \\
	&= (1-\alpha)^s\sup_{\eta\in\R\setminus\{0\}}\frac{\abs{\hat{f}_1(\eta,t)-\hat{f}_2(\eta,t)}}{\abs{\eta}^s} \\
	&= (1-\alpha)^s\norminf{h(t)},
\end{align*}
whence
\begin{align*}
	\abs{e^th(\xi,t)} &\leq \abs{e^Th(\xi,T)}+2^{1-s}\alpha^s\Lip(G)\int_T^te^\tau\abs{m_{X,1}(\tau)-m_{X,2}(\tau)}\,d\tau \\
	&\phantom{\leq} +(1-\alpha)^s\int_T^t\norminf{e^\tau h(\tau)}\,d\tau.
\end{align*}
Taking the supremum of both sides over $\xi\in\R\setminus\{0\}$ we arrive at
\begin{align*}
	\norminf{e^th(t)} &\leq \norminf{e^Th(T)}+2^{1-s}\alpha^s\Lip(G)\int_T^te^\tau\abs{m_{X,1}(\tau)-m_{X,2}(\tau)}\,d\tau \\
	&\phantom{\leq} +(1-\alpha)^s\int_T^t\norminf{e^\tau h(\tau)}\,d\tau,
\end{align*}
which, owing to Lemma~\ref{lemma:mX}, we further develop as
\begin{align*}
	\norminf{e^th(t)} &\leq \norminf{e^Th(T)}+C\int_T^te^{\left[\alpha\left(\Lip(G)-1\right)+1\right]\tau}\,d\tau+(1-\alpha)^s\int_T^t\norminf{e^\tau h(\tau)}\,d\tau \\
	&= \norminf{e^Th(T)}+C'\left(e^{\left[\alpha\left(\Lip(G)-1\right)+1\right]t}-e^{\left[\alpha\left(\Lip(G)-1\right)+1\right]T}\right) \\
	&\phantom{=} +(1-\alpha)^s\int_T^t\norminf{e^\tau h(\tau)}\,d\tau,
\end{align*}
where $C,\,C'>0$ are constants. Invoking Gr\"{o}nwall's inequality, we discover then\footnote{For $a,\,b\geq 0$, we write $a\lesssim b$ to mean that there exists a constant $K>0$, whose specific value is unimportant, such that $a\leq Kb$.}
\begin{align}
	\begin{aligned}[b]
		\norminf{h(t)} &\lesssim \left(\norminf{e^Th(T)}+e^{\left[\alpha\left(\Lip(G)-1\right)+1\right]t}-e^{\left[\alpha\left(\Lip(G)-1\right)+1\right]T}\right)e^{-[1-(1-\alpha)^s]t} \\
		&= \left(\norminf{e^Th(T)}-e^{\left[\alpha\left(\Lip(G)-1\right)+1\right]T}\right)e^{-[1-(1-\alpha)^s]t}+e^{\left[\alpha\left(\Lip(G)-1\right)+(1-\alpha)^s\right]t}.
	\end{aligned}
	\label{eq:norm.h}
\end{align}

Let us examine the exponent $\alpha\left(\Lip(G)-1\right)+(1-\alpha)^s$ of the second term on the right-hand side. Since, by assumption, the equilibrium $m_X^\infty\in (0,\,1)$ of~\eqref{eq:mX} exists and is attractive, and moreover $m_{X,1}^0,\,m_{X,2}^0\neq 0,\,1$, for every $\eta>0$ there exists $T>0$ such that $m_{X,1}(t),\,m_{X,2}(t)\in (m_X^\infty-\eta,\,m_X^\infty+\eta)$ for all $t>T$. Thus, it is sufficient to estimate $\Lip(G)$ in a neighbourhood of the form $(\bar{y}-\eta,\,\bar{y}+\eta)$, $\bar{y}=m_X^\infty$. Recalling that $G'_Y(y)=g(y)$, we have
$$ \Lip(G)=\sup_{y\in (\bar{y}-\eta,\,\bar{y}+\eta)}g(y). $$
In particular, owing to the continuity of $g$ in $[0,\,1]$ (because $G\in C^1(0,\,1)$ by assumption), we can choose $\eta$ so small that $\Lip(G)=g(\bar{y})+\epsilon$ for some $\epsilon>0$. Hence
$$ \alpha\left(\Lip(G)-1\right)+(1-\alpha)^s=\alpha\left(g(\bar{y})+\epsilon-1\right)+(1-\alpha)^s $$
and if we fix $\epsilon<1-g(\bar{y})-(1-\alpha)^s/\alpha$ we obtain $\alpha\left(\Lip(G)-1\right)+(1-\alpha)^s<0$. Notice that such an $\epsilon$ exists because $1-g(\bar{y})-(1-\alpha)^s/\alpha>0$ by~\eqref{ass:alpha}.

Finally, passing to the limit $t\to +\infty$ in~\eqref{eq:norm.h} we get $\norminf{h(t)}\to 0$ and we are done.
\end{proof}

\subsubsection{Identification of the Maxwellian}
\label{sect:Maxwellian.x}
As seen in the proof of Theorem~\ref{theo:ds}, the Fourier-transformed version of~\eqref{eq:Boltzmann.x} reads
$$ \partial_t\hat{f}(\xi,t)=H(m_X,\xi)\hat{f}((1-\alpha)\xi,t)-\hat{f}(\xi,t), $$
where $H$ is defined by~\eqref{eq:H}. If the equilibrium $m_X^\infty\in (0,\,1)$ of~\eqref{eq:mX} exists and is asymptotically stable, cf.~\eqref{eq:GY.stability}, we may construct a stationary solution $f^\infty=f^\infty(x)$ to~\eqref{eq:Boltzmann.x} with mean $m_X^\infty$ by setting
$$ H(m_X^\infty,\xi)\hat{f}^\infty((1-\alpha)\xi)-\hat{f}^\infty(\xi)=0, $$
whence
\begin{equation}
	\hat{f}^\infty(\xi)=H(m_X^\infty,\xi)\hat{f}^\infty((1-\alpha)\xi).
	\label{eq:hatf.recur}
\end{equation}
From this, arguing recursively, we get:
\begin{proposition}
\label{prop:Maxwellian.x}
The Maxwellian $f^\infty\in\cP([0,\,1])$ is univocally characterised by its Fourier transform as
\begin{equation}
	\hat{f}^\infty(\xi)=\prod_{k=0}^{\infty}H(m_X^\infty,(1-\alpha)^k\xi),
	\label{eq:hatf}
\end{equation}
where $H$ is given by~\eqref{eq:H}.
\end{proposition}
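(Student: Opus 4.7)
The approach is to iterate the functional equation \eqref{eq:hatf.recur} and then pass to the limit. Applying \eqref{eq:hatf.recur} to $\xi$, $(1-\alpha)\xi$, $(1-\alpha)^2\xi$, \ldots in succession yields by induction on $N\in\N$
$$ \hat{f}^\infty(\xi)=\left(\prod_{k=0}^{N-1}H(m_X^\infty,(1-\alpha)^k\xi)\right)\hat{f}^\infty((1-\alpha)^N\xi). $$
The representation \eqref{eq:hatf} will then follow by letting $N\to+\infty$ in this identity, provided I verify (a) that the last factor tends to $1$ and (b) that the infinite product converges.

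For (a), since $\alpha\in(0,\,1]$ one has $(1-\alpha)^N\xi\to 0$ pointwise in $\xi\in\R$; the continuity of the Fourier transform of a probability measure, together with the normalisation $\hat{f}^\infty(0)=\int_0^1 f^\infty(x)\,dx=1$, then gives $\hat{f}^\infty((1-\alpha)^N\xi)\to 1$. For (b), I would use the explicit form \eqref{eq:H} of $H$ and the equilibrium identity $G(m_X^\infty)=m_X^\infty$ to estimate
$$ \abs{1-H(m_X^\infty,\eta)}=m_X^\infty\abs{1-e^{-i\alpha\eta}}\leq\alpha m_X^\infty\abs{\eta}, $$
so that, substituting $\eta=(1-\alpha)^k\xi$,
$$ \abs{1-H(m_X^\infty,(1-\alpha)^k\xi)}\leq\alpha m_X^\infty(1-\alpha)^k\abs{\xi}. $$
The corresponding series is a convergent geometric series (uniformly on compact sets in $\xi$), so the infinite product in \eqref{eq:hatf} is absolutely convergent. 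Combining (a) and (b) yields the claimed formula for $\hat{f}^\infty(\xi)$.

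Uniqueness of $f^\infty\in\cP([0,\,1])$ then comes for free: any stationary solution with mean $m_X^\infty$ satisfies \eqref{eq:hatf.recur}, and the argument above pins down its Fourier transform to the explicit expression \eqref{eq:hatf}; by injectivity of the Fourier transform on $\cP([0,\,1])$, the measure itself is uniquely determined. I do not foresee a genuine obstacle here: the only mildly delicate point is the convergence of the infinite product, but the first-order estimate on $1-H(m_X^\infty,\cdot)$ reduces it at once to a geometric series, and the normalisation $\hat{f}^\infty(0)=1$ takes care of the remainder factor in the iteration.
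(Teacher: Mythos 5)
Your proof is correct, and it takes a genuinely different---and in two respects more careful---route than the paper's. The paper argues in two parts: it first checks, via an index shift, that any function $C\prod_{k=0}^{\infty}H(m_X^\infty,(1-\alpha)^k\xi)$ satisfies \eqref{eq:hatf.recur} and fixes $C=1$ from the normalisation $\hat{f}^\infty(0)=H(m_X^\infty,0)=1$; it then takes an arbitrary solution of \eqref{eq:hatf.recur}, forms the quotient $F(\xi)=\hat{f}^\infty(\xi)/\prod_{k=0}^{\infty}H(m_X^\infty,(1-\alpha)^k\xi)$, shows $F(\xi)=F((1-\alpha)\xi)$, and concludes $F\equiv F(0)=1$ by continuity, using $(1-\alpha)^n\xi\to 0$. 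Your forward iteration $\hat{f}^\infty(\xi)=\bigl(\prod_{k=0}^{N-1}H(m_X^\infty,(1-\alpha)^k\xi)\bigr)\hat{f}^\infty((1-\alpha)^N\xi)$ rests on the same two ingredients (the contraction $\xi\mapsto(1-\alpha)\xi$ and continuity of $\hat{f}^\infty$ at $0$ together with $\hat{f}^\infty(0)=1$), but it buys two things the paper leaves tacit: you actually prove that the infinite product converges, absolutely and uniformly on compact sets, via the bound $\abs{1-H(m_X^\infty,\eta)}=G(m_X^\infty)\abs{e^{-i\alpha\eta}-1}\leq\alpha G(m_X^\infty)\abs{\eta}$ summed geometrically, whereas the paper writes the product with no convergence argument at all; and, by keeping the tail factor multiplicative rather than dividing by the product, you sidestep the fact that the paper's $F$ is a priori undefined at any $\xi$ where the infinite product vanishes (and needs the uniform convergence you supply for its continuity). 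What the paper's two-part structure buys in exchange is the explicit verification that the product itself solves \eqref{eq:hatf.recur}; in your write-up this is only implicit (the product coincides with $\hat{f}^\infty$, which solves the recursion by hypothesis), but that suffices for the characterisation as stated. One small remark: the equilibrium identity $G(m_X^\infty)=m_X^\infty$ you invoke is correct---it is precisely the definition of $m_X^\infty=\bar{y}$---but unnecessary, since $G(m_X^\infty)\leq 1$ already yields the geometric bound; dropping it makes your estimate apply verbatim in the conserved-mean setting of the formula \eqref{eq:hatf.mX0} as well.
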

\begin{proof}
First, we check that every $\hat{f}^\infty$ of the form $\hat{f}^\infty(\xi)=C\prod_{k=0}^{\infty}H(m_X^\infty,(1-\alpha)^k\xi)$, where $C>0$ is a constant, satisfies~\eqref{eq:hatf.recur}. We have:
\begin{align*}
	H(m_X^\infty,\xi)\hat{f}^\infty((1-\alpha)\xi) &= CH(m_X^\infty,\xi)\prod_{k=0}^{\infty}H(m_X^\infty,(1-\alpha)^{k+1}\xi) \\
	&= CH(m_X^\infty,\xi)\prod_{k=1}^{\infty}H(m_X^\infty,(1-\alpha)^k\xi) \\
	&= C\prod_{k=0}^{\infty}H(m_X^\infty,(1-\alpha)^k\xi)=\hat{f}^\infty(\xi).
\end{align*}
Moreover, since $\hat{f}^\infty(0)=\int_0^1f^\infty(x)\,dx=1$ and $H(m_X^\infty,0)=1$ (cf.~\eqref{eq:H}), it follows $C=1$.

Second, we assume that $\hat{f}^\infty$ satisfies~\eqref{eq:hatf.recur} and we show that it has necessarily the form~\eqref{eq:hatf}. For this, we define:
$$ F(\xi):=\frac{\hat{f}^\infty(\xi)}{\prod\limits_{k=0}^{\infty}H(m_X^\infty,(1-\alpha)^k\xi)} $$
and, invoking~\eqref{eq:hatf.recur}, we observe that
$$ F(\xi)=\frac{H(m_X^\infty,\xi)\hat{f}^\infty((1-\alpha)\xi)}{\prod\limits_{k=0}^{\infty}H(m_X^\infty,(1-\alpha)^k\xi)}=
	\frac{\hat{f}^\infty((1-\alpha)\xi)}{\prod\limits_{k=1}^{\infty}H(m_X^\infty,(1-\alpha)^k\xi)}=F((1-\alpha)\xi), $$
whence $F(\xi)=F((1-\alpha)^n\xi)$ for all $n\in\N$ and all $\xi\in\R$. Passing to the limit $n\to\infty$ we discover, by continuity of $F$ and using the fact that $0\leq 1-\alpha<1$,
$$ F(\xi)=\lim_{n\to\infty}F((1-\alpha)^n\xi)=F(0)=1, \quad \forall\,\xi\in\R. $$
Thus $F$ is constant and the thesis follows.
\end{proof}

When the equilibrium $m_X^\infty\in (0,\,1)$ of~\eqref{eq:mX} exists and is asymptotically stable, Theorem~\ref{theo:ds} ensures that the Maxwellian constructed in Proposition~\ref{prop:Maxwellian.x} is the only stationary distribution towards which solutions to~\eqref{eq:Boltzmann.x} with initial mean $m_X^0\neq 0,\,1$ converge in time, at least in a suitable range of values of the parameter $\alpha$, cf.~\eqref{ass:alpha}. Indeed:
\begin{corollary}
\label{cor:asymptotics.x}
Under the same assumptions as in Theorem~\ref{theo:ds},
$$ \lim_{t\to +\infty}d_s(f(t),f^\infty)=0, $$
where $f(t)\in\cP([0,\,1])$ is any solution to~\eqref{eq:Boltzmann.x} issuing from an initial condition with mean $m_X^0\neq 0,\,1$ and $f^\infty$ is the stationary solution built in Proposition~\ref{prop:Maxwellian.x}.
\end{corollary}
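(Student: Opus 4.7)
The plan is to read the corollary as a direct consequence of Theorem~\ref{theo:ds} applied to the pair $(f,f^\infty)$. The only substantive task is to justify that $f^\infty$ qualifies as a second solution in the sense of that theorem; after that, the conclusion is immediate.

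First I would observe that $f^\infty$, as built in Proposition~\ref{prop:Maxwellian.x}, is a time-independent solution of~\eqref{eq:Boltzmann.x} lying in $\cP([0,\,1])$. Indeed, the functional relation~\eqref{eq:hatf.recur} is precisely the stationary version of the Fourier-transformed kinetic equation
$$ \partial_t\hat{f}(\xi,t)=H(m_X,\xi)\hat{f}((1-\alpha)\xi,t)-\hat{f}(\xi,t) $$
derived in the proof of Theorem~\ref{theo:ds}, so the constant curve $t\mapsto f^\infty$ solves~\eqref{eq:Boltzmann.x}. Moreover, the mean of $f^\infty$ equals $m_X^\infty\in(0,\,1)$ by construction: differentiating~\eqref{eq:hatf.recur} once at $\xi=0$, using $H(m_X^\infty,0)=1$, $\partial_\xi H(m_X^\infty,0)=-i\alpha G(m_X^\infty)$, and the fixed-point condition $G(m_X^\infty)=m_X^\infty$, one finds $\partial_\xi\hat{f}^\infty(0)=-i m_X^\infty$, i.e.\ $\int_0^1xf^\infty(x)\,dx=m_X^\infty$.

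Second, I would apply Theorem~\ref{theo:ds} with $f_1(t):=f(t)$ and $f_2(t):=f^\infty$. The two initial means $m_X^0$ and $m_X^\infty$ both lie in $(0,\,1)$ by hypothesis, and the smallness condition~\eqref{ass:alpha} on $\alpha$ is inherited unchanged from the statement of the corollary. The thesis of the theorem is exactly
$$ \lim_{t\to+\infty}d_s(f(t),f^\infty)=0, $$
which is what we want.

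I expect no genuine obstacle: all the analytical effort has already been spent in Theorem~\ref{theo:ds}, and the ``second solution'' role of $f^\infty$ is essentially tautological once Proposition~\ref{prop:Maxwellian.x} is available. The mildest point to keep in mind is the verification that the mean of $f^\infty$ coincides with $m_X^\infty$, which, as noted above, follows from a single differentiation of the functional equation~\eqref{eq:hatf.recur} at the origin.
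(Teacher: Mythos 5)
Your proposal is correct and follows essentially the same route as the paper, whose proof is exactly the one-line application of Theorem~\ref{theo:ds} with $f_2=f^\infty$, noting that $f^\infty$ is a constant-in-time solution of~\eqref{eq:Boltzmann.x}. Your additional verification that the mean of $f^\infty$ equals $m_X^\infty$ (via differentiation of~\eqref{eq:hatf.recur} at $\xi=0$, using $G(m_X^\infty)=m_X^\infty$) is accurate and makes explicit a point the paper leaves implicit, namely that the hypothesis $m_{X,2}^0\in(0,\,1)$ of the theorem is indeed satisfied by $f^\infty$.
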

\begin{proof}
The result follows from Theorem~\ref{theo:ds} by taking $f_2=f^\infty$, which is indeed a (constant-in-time) solution to~\eqref{eq:Boltzmann.x}.
\end{proof}

For the sake of completeness, we observe that if $m_X^0=0$ then $f^\infty=\delta_0$, for in that case the mean awareness remains constant and equal to $0$ at all times $t>0$. For an analogous reason, if $m_X^0=1$ then $f^\infty=\delta_1$. If instead $m_X^\infty$ does not exist or is unstable (and, in the latter case, $m_X^0\neq m_X^\infty$) then $f^\infty$ coincides with the Dirac distribution centred in either $0$ or $1$, depending on which of these two values is the asymptotically stable equilibrium of~\eqref{eq:mX}, or, if they are both, on whether $m_X^0\lessgtr m_X^\infty$. Finally, if $m_X^\infty$ is unstable and $m_X^0=m_X^\infty$ then by carefully inspecting the proof of Theorem~\ref{theo:ds} and arguing like in Corollary~\ref{cor:asymptotics.x} we see that the Maxwellian is again the one given by Proposition~\ref{prop:Maxwellian.x}, for in that case we can assume $m_{X,1}(t)=m_{X,2}(t)=m_X^\infty$ for all $t>0$.

In general, however, formula~\eqref{eq:hatf} provided by Proposition~\ref{prop:Maxwellian.x} does not allow one to write explicitly the Maxwellian in the space of the microscopic state $x$. This is possible only in very special cases, such as e.g., $\alpha=1$, for which the infinite product in~\eqref{eq:hatf} reduces to a finite one. Specifically, using the convention $0^0=1$ and considering that $H(m_X^\infty,(1-\alpha)^k\xi)=H(m_X^\infty,0)=1$ by~\eqref{eq:H} for all $k\geq 1$, we obtain
$$ \hat{f}^\infty(\xi)=H(m_X^\infty,\xi)=1+\left(e^{-i\xi}-1\right)G(m_X^\infty), $$
whence the inverse Fourier transform produces
\begin{equation}
	f^\infty=\left(1-G(m_X^\infty)\right)\delta_0+G(m_X^\infty)\delta_1.
	\label{eq:f^inf.x}
\end{equation}
This is a convex combination of $\delta_0$, $\delta_1$ representing a bipartite splitting of the population in a fraction $1-G(m_X^\infty)$ of individuals totally unable to detect fake news and the complementary fraction $G(m_X^\infty)$ of individuals extremely able to unmask them.

For $0<\alpha<1$, if we let
\begin{align*}
	\hat{f}^\infty_k(\xi) &:= H(m_X^\infty,(1-\alpha)^k\xi) \\
	&= 1+\left(e^{-i\alpha(1-\alpha)^k\xi}-1\right)G(m_X^\infty), \quad k\geq 0,
\end{align*}
we may observe that $\hat{f}^\infty_k$ is the Fourier transform of
$$ f^\infty_k=\left(1-G(m_X^\infty)\right)\delta_0+G(m_X^\infty)\delta_{\alpha(1-\alpha)^k} $$
and that, in view of~\eqref{eq:hatf}, we may construct $f^\infty$ as the infinite convolution of the $f^\infty_k$'s. Concerning this, notice that the $s$-Fourier distance, $s\in (0,\,1]$, between $f^\infty_k$ and $\delta_0$ diminishes for increasing $k$:
\begin{align}
    \begin{aligned}[b]
	d_s(f^\infty_k,\delta_0)=\sup_{\xi\in\R\setminus\{0\}}\frac{\abs{\hat{f}^\infty_k(\xi)-1}}{\abs{\xi}^s}
		&=G(m_X^\infty)\sup_{\xi\in\R\setminus\{0\}}\frac{\abs{e^{-i\alpha(1-\alpha)^k\xi}-1}}{\abs{\xi}^s} \\
	&= G(m_X^\infty)\sup_{\xi\in\R\setminus\{0\}}\sqrt{2\frac{1-\cos{(\alpha(1-\alpha)^k\xi)}}{\abs{\xi}^{2s}}} \\
	&= G(m_X^\infty)\alpha^s{(1-\alpha)}^{sk}\sup_{\eta\in\R\setminus\{0\}}\sqrt{2\frac{1-\cos{\eta}}{\abs{\eta}^{2s}}} \\
	&\leq G(m_X^\infty)2^{1-s}\alpha^s{(1-\alpha)}^{sk},
    \end{aligned}
    \label{eq:dist.fk-d0}
\end{align}
In particular, $d_s(f^\infty_k,\delta_0)\to 0$ exponentially fast for every $s\in (0,\,1]$ when $k\to\infty$. The closer $\alpha$ to $1$ the faster the convergence. Since the Dirac distribution is the identity element of the convolution, we expect a truncation of the infinite product in~\eqref{eq:hatf} to possibly provide a reliable approximation of $\hat{f}^\infty$, hence also of $f^\infty$ by inverse Fourier transform, in the sense of the metric $d_s$.

For example, considering the first three terms corresponding to $k=0,\,1,\,2$ and recalling that $\delta_a\ast\delta_b=\delta_{a+b}$ for $a,\,b\in\R$, where $\ast$ denotes convolution, we get:
\begin{align*}
	f^\infty &\approx f_0^\infty\ast f_1^\infty\ast f_2^\infty \\
	&= \left(1-G(m_X^\infty)\right)^3\delta_0 \\
	&\phantom{=} +G(m_X^\infty)\left(1-G(m_X^\infty)\right)^2\left[\delta_\alpha+\delta_{\alpha(1-\alpha)}+\delta_{\alpha(1-\alpha)^2}\right] \\
	&\phantom{=} +G^2(m_X^\infty)\left(1-G(m_X^\infty)\right)\left[\delta_{\alpha+\alpha(1-\alpha)}+\delta_{\alpha+\alpha(1-\alpha)^2}+\delta_{\alpha(1-\alpha)+\alpha(1-\alpha)^2}\right] \\
	&\phantom{=} +G^3(m_X^\infty)\delta_{\alpha+\alpha(1-\alpha)+\alpha(1-\alpha)^2},
\end{align*}
which shows that $f^\infty$ may be approximated by a weighted sum of Dirac distributions centred in the points
\begin{equation}
	\begin{aligned}
		x^\infty_1 &= 0 \\
		x^\infty_2 &= \alpha \\
		x^\infty_3 &= \alpha(1-\alpha) \\
		x^\infty_4 &= \alpha(1-\alpha)^2 \\
		x^\infty_5 &= \alpha+\alpha(1-\alpha) \\
		x^\infty_6 &= \alpha+\alpha(1-\alpha)^2 \\
		x^\infty_7 &= \alpha(1-\alpha)+\alpha(1-\alpha)^2 \\
		x^\infty_8 &= \alpha+\alpha(1-\alpha)+\alpha(1-\alpha)^2
	\end{aligned}
	\begin{aligned}
		&\vphantom{
			\begin{aligned}
				x^\infty_1 &= 0
			\end{aligned}
		} \quad\quad\ \text{with weight } \left(1-G(m_X^\infty)\right)^3 \\
		&\left.
		\vphantom{
			\begin{aligned}
				x^\infty_2 &= \alpha \\
				x^\infty_3 &= \alpha(1-\alpha) \\
				x^\infty_4 &= \alpha(1-\alpha)^2
			\end{aligned}
		}\right\} \quad \text{with weight } G(m_X^\infty)\left(1-G(m_X^\infty)\right)^2 \\
		&\left.
		\vphantom{
			\begin{aligned}
				x^\infty_5 &= \alpha+\alpha(1-\alpha) \\
				x^\infty_6 &= \alpha+\alpha(1-\alpha)^2 \\
				x^\infty_7 &= \alpha(1-\alpha)+\alpha(1-\alpha)^2
			\end{aligned}
		}\right\} \quad \text{with weight } G^2(m_X^\infty)\left(1-G(m_X^\infty)\right) \\
		&\vphantom{
			\begin{aligned}
				x^\infty_8 &= \alpha+\alpha(1-\alpha)+\alpha(1-\alpha)^2
			\end{aligned}
		} \quad\quad\ \text{with weight } G^3(m_X^\infty).
	\end{aligned}
	\label{eq:x.infty}
\end{equation}
Clearly, the accuracy of such an approximation depends on the value of $\alpha$, which, as shown by the computations above, determines the speed of convergence of $d_s(f^\infty_k,\delta_0)$ to $0$ for $k$ large. In general, however, we infer that the asymptotic profile of the awareness distribution resulting from the learning dynamics~\eqref{eq:x'}-\eqref{eq:lambda} features \textit{awareness clusters} distributed in $[0,\,1]$. In the special case $\alpha=1$, all clusters collapse in $\{0,\,1\}$.

\subsubsection{The case~\texorpdfstring{$\boldsymbol{Y\sim\cU([0,\,1])}$}{}}
\label{sect:Y.unif}
If the reliability of news is uniformly distributed in $[0,\,1]$, i.e. if $g(y)=\chi_{[0,\,1]}(y)$ and consequently
$$ G(y)=
	\begin{cases}
		0 & \text{if } y<0 \\
		y & \text{if } 0\leq y\leq 1 \\
		1 & \text{if } y>1,
	\end{cases} $$
then, as already observed,~\eqref{eq:mX} reduces to
$$ \dot{m}_X=0, $$
which indicates that the mean awareness is conserved by the learning dynamics~\eqref{eq:x'}-\eqref{eq:lambda}. In this case, every solution to~\eqref{eq:Boltzmann.x} preserves the initial mean, hence in particular $m_X^\infty=m_X^0$.

In view of Proposition~\ref{prop:Maxwellian.x}, we construct a Maxwellian $f^\infty$ with the prescribed mean $m_X^0\in [0,\,1]$ as:
\begin{equation}
	\hat{f}^\infty(\xi)=\prod_{k=0}^{\infty}H(m_X^0,(1-\alpha)^k\xi).
	\label{eq:hatf.mX0}
\end{equation}
Next, we strengthen the result of Corollary~\ref{cor:asymptotics.x} as follows (cf.~\cite{toscani2009EPL}):
\begin{theorem}
\label{theo:asymptotics.x-Y.unif}
Let $Y\sim\cU([0,\,1])$ and let $f^0\in\cP([0,\,1])$ be any initial awareness distribution with mean $m_X^0\in [0,\,1]$. If $f(t)\in\cP([0,\,1])$, $t>0$, is the solution to~\eqref{eq:Boltzmann.x} issuing from $f^0$ and $f^\infty\in\cP([0,\,1])$ is the Maxwellian defined by the Fourier transform~\eqref{eq:hatf.mX0} then, for every $s\in (0,\,2]$,
$$ d_s(f(t),f^\infty)\leq d_s(f^0,f^\infty)e^{-\left[1-(1-\alpha)^s\right]t}, \qquad t>0. $$
In particular,
$$ \lim_{t\to +\infty}d_s(f(t),f^\infty)=0 $$
for every $\alpha\in (0,\,1]$.
\end{theorem}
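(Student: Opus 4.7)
The plan is to exploit the special feature of the uniform case, namely that $m_X(t)\equiv m_X^0$ is conserved in time. This turns the multiplier $H(m_X,\xi)$ appearing in the Fourier-transformed version of~\eqref{eq:Boltzmann.x} (derived in the proof of Theorem~\ref{theo:ds}) into a fixed function $H(m_X^0,\xi)$, independent of time and identical for every solution sharing the initial mean $m_X^0$. Since $\hat{f}^\infty$ defined by~\eqref{eq:hatf.mX0} satisfies $\hat{f}^\infty(\xi)=H(m_X^0,\xi)\hat{f}^\infty((1-\alpha)\xi)$, subtracting this stationary identity from the evolution equation for $\hat{f}$ makes the inhomogeneous ``$m_{X,1}-m_{X,2}$'' cross-term of Theorem~\ref{theo:ds} disappear altogether.

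First, I would check that $d_s(f^0,f^\infty)<\infty$ for $s\in(0,\,2]$. Both measures have unit mass, so I only need the first moments to coincide: logarithmic differentiation of~\eqref{eq:hatf.mX0} at $\xi=0$ yields
$$-i\partial_\xi\hat{f}^\infty(0)=-i\sum_{k=0}^{\infty}(1-\alpha)^k\cdot\alpha G(m_X^0)=G(m_X^0)=m_X^0,$$
where the last equality uses $G(y)=y$ on $[0,\,1]$. Hence $f^\infty$ has mean $m_X^0$, agreeing with $f^0$, so $d_s$ is finite on the claimed range.

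Next, setting $\Delta(\xi,t):=\hat{f}(\xi,t)-\hat{f}^\infty(\xi)$, the two equations combine into the simple linear relation
$$\partial_t\Delta(\xi,t)+\Delta(\xi,t)=H(m_X^0,\xi)\Delta((1-\alpha)\xi,t).$$
Multiplying by $e^t$, integrating from $0$ to $t$, using $\abs{H(m_X^0,\xi)}\leq 1$, dividing by $\abs{\xi}^s$ and applying the change of variable $\eta=(1-\alpha)\xi$ inside the supremum (exactly as in the proof of Theorem~\ref{theo:ds}) produces
$$e^t d_s(f(t),f^\infty)\leq d_s(f^0,f^\infty)+(1-\alpha)^s\int_0^t e^\tau d_s(f(\tau),f^\infty)\,d\tau.$$
Gr\"onwall's inequality then immediately yields $e^t d_s(f(t),f^\infty)\leq d_s(f^0,f^\infty)e^{(1-\alpha)^s t}$, which is the stated estimate; the limit follows since $1-(1-\alpha)^s>0$ for every $\alpha\in(0,\,1]$ and $s>0$.

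The main point to get right, rather than a genuine obstacle, is the bookkeeping around the admissible range of $s$: in Theorem~\ref{theo:ds} we were confined to $s\in(0,\,1]$ because the mean was not conserved and so moments of order $\geq 1$ of $f_1,\,f_2$ could differ; here the conservation of $m_X$ lets the first moments of $f(t)$ and $f^\infty$ coincide for all $t\geq 0$, permitting $s$ up to $2$. Equally importantly, the absence of any $m_{X,1}-m_{X,2}$ term removes the need for condition~\eqref{ass:alpha}, so the exponential convergence holds unconditionally on $\alpha\in(0,\,1]$, which is the strengthening over Corollary~\ref{cor:asymptotics.x} that the theorem advertises.
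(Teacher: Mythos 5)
Your proof is correct and follows essentially the same route as the paper's: exploit the conservation of $m_X$ in the uniform case to collect the time-independent multiplier $H(m_X^0,\xi)$, bound $\abs{H(m_X^0,\xi)}\leq 1$, rescale $\eta=(1-\alpha)\xi$ in the supremum, and apply Gr\"onwall's inequality to $e^t d_s(f(t),f^\infty)$. The only blemish is a sign-convention slip in your side-check of the mean (with $\hat{\mu}(\xi)=\int e^{-i\xi x}\,d\mu$ the first moment is $i\partial_\xi\hat{\mu}(0)$, not $-i\partial_\xi\hat{\mu}(0)$), but your conclusion that $f^\infty$ has mean $m_X^0$ --- which justifies the finiteness of $d_s$ for $s\in (1,\,2]$ and which the paper only addresses in the remark following the theorem --- is correct.
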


\begin{remark}
Theorem~\ref{theo:asymptotics.x-Y.unif} is stronger than Corollary~\ref{cor:asymptotics.x} in that the convergence of any solution of~\eqref{eq:Boltzmann.x} to the Maxwellian~\eqref{eq:hatf.mX0} holds with no restrictions on the parameter $\alpha$. Moreover, the convergence is achieved in $s$-Fourier metrics up to $s=2$, because in this case the statistical moments of the probability measures involved in the problem coincide up to the order $1$. Notice that the value of $s$ affects the speed of convergence of $f$ to $f^\infty$. In particular, the higher $s$ the faster the convergence.
\end{remark}

\begin{proof}[Proof of Theorem~\ref{theo:asymptotics.x-Y.unif}]
Since $f^\infty$ is a (constant-in-time) solution to~\eqref{eq:Boltzmann.x}, we may proceed like in the proof of Theorem~\ref{theo:ds} taking $f_1=f$ and $f_2=f^\infty$. We have then:
$$ \partial_t\frac{\hat{f}-\hat{f}^\infty}{\abs{\xi}^s}=H(m_X^0,\xi)\frac{\hat{f}((1-\alpha)\xi,t)-\hat{f}^\infty((1-\alpha)\xi)}{\abs{\xi}^s}-\frac{\hat{f}-\hat{f}^\infty}{\abs{\xi}^s}, $$
where it should be noticed that the term $H(m_X^0,\xi)$ at the right-hand side can now be collected because, due to $Y\sim\cU([0,\,1])$, $f(t)$ and $f^\infty$ have the same mean $m_X^0$ for all $t>0$. Consequently, letting $h(\xi,t):=\frac{\hat{f}(\xi,t)-\hat{f}^\infty(\xi)}{\abs{\xi}^s}$, we obtain:
$$ \partial_th+h=H(m_X^0,\xi)\frac{\hat{f}((1-\alpha)\xi,t)-\hat{f}^\infty((1-\alpha)\xi)}{\abs{\xi}^s}, $$
whence, multiplying both sides by $e^t$ and integrating in time on the interval $[0,\,t]$, $t>0$,
$$ e^th(\xi,t)=h_0(\xi)+H(m_X^0,\xi)\int_0^te^\tau\frac{\hat{f}((1-\alpha)\xi,\tau)-\hat{f}^\infty((1-\alpha)\xi)}{\abs{\xi}^s}\,d\tau, $$
where $h_0(\xi):=h(\xi,0)=\frac{\hat{f}^0(\xi)-\hat{f}^\infty(\xi)}{\abs{\xi}^s}$. Arguing like in the proof of Theorem~\ref{theo:ds}, we obtain then the following bound:
$$ e^t\norminf{h(t)}\leq\norminf{h_0}+(1-\alpha)^s\int_0^te^\tau\norminf{h(\tau)}\,d\tau $$
with $\norminf{h(t)}=d_s(f(t),f^\infty)$, whence the thesis follows by applying Gr\"onwall's inequality to the function $e^t\norminf{h(t)}$.
\end{proof}

\subsection{Numerical tests}
In this section, we show some numerical solutions of the Boltzmann-type equation~\eqref{eq:Boltzmann.x} with interaction rules defined by~\eqref{eq:x'}, which illustrate the previous theoretical results. To solve~\eqref{eq:Boltzmann.x} numerically we use a modification of the Nanbu-Babovski Monte Carlo algorithm, which is based on the implementation of discrete-in-time stochastic particle dynamics producing the kinetic equation~\eqref{eq:Boltzmann.x} in the continuous time limit, cf. e.g.,~\cite{fraia2020RUMI,pareschi2013BOOK} for details.

\begin{figure}[!t]
    \centering
    \subfigure[]{\includegraphics[width=.35\textwidth]{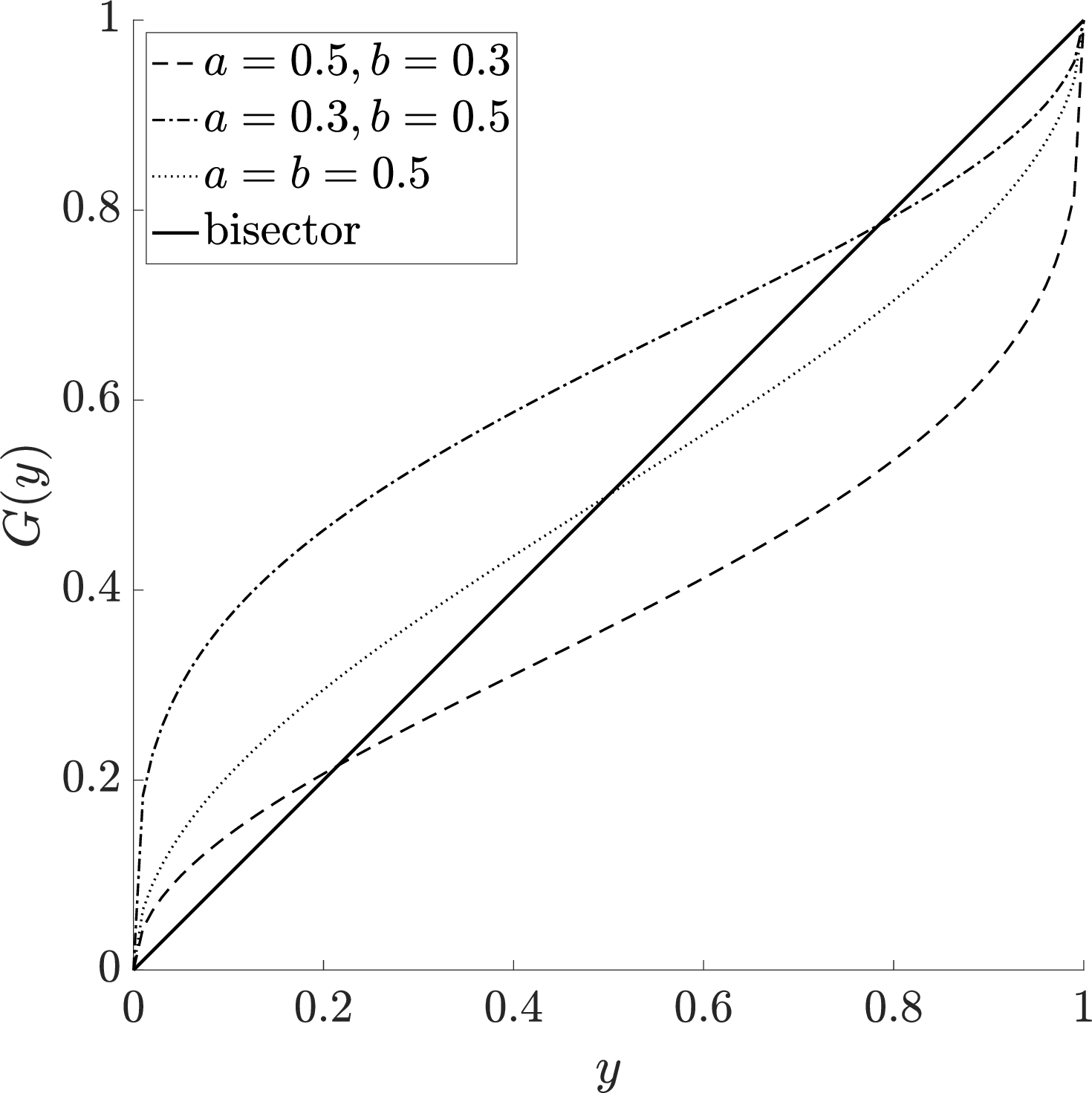} \label{fig:betacdf_stable}}
    \qquad
    \subfigure[]{\includegraphics[width=.35\textwidth]{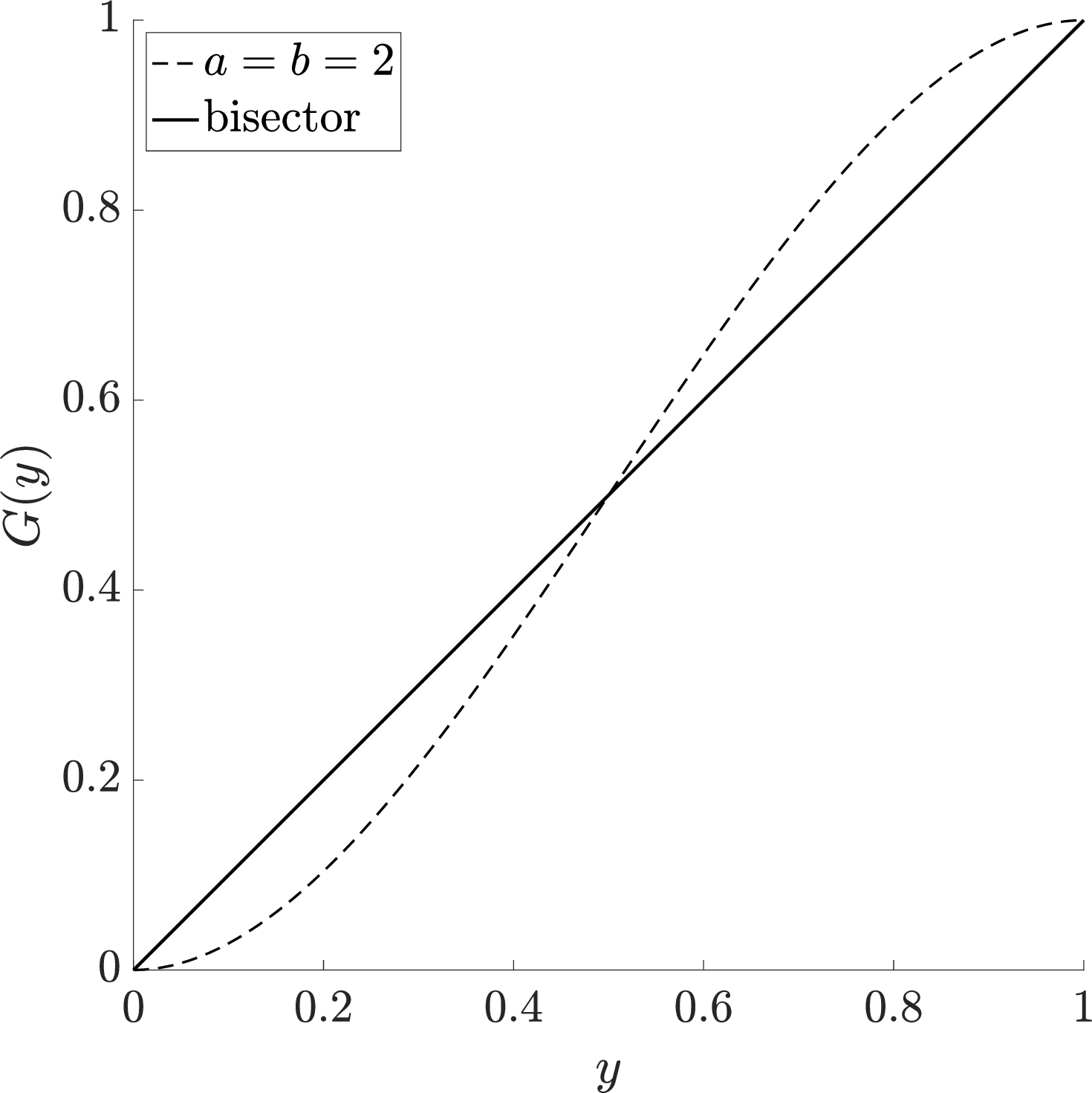} \label{fig:betacdf_no_stable}} 
    \caption{Cumulative distribution function of $Y\sim\operatorname{Beta}(a,\,b)$ for different values of the parameters $a,\,b$. Panel (a) shows three sets of parameters for which an asymptotically stable value of $m_X^\infty\in (0,\,1)$ exists. Panel (b) shows instead a set of parameters for which $m_X^\infty\in (0,\,1)$ is not asymptotically stable}
   \label{fig:betacdf}
\end{figure}

In the following tests we set $Y\sim\operatorname{Beta}(a,\,b)$, i.e. we fix
$$ g(y)=\frac{y^{a-1}(1-y)^{b-1}}{\operatorname{B}(a,b)}, \qquad y\in [0,\,1], \quad a,\,b\in\R_+, $$
where $\operatorname{B}(\cdot,\cdot)$ denotes the beta function. With this distribution, tuning the parameters $a,\,b$, we can simulate different scenarios of the evolution of the mean awareness $m_X$ at the basis of the large time trend of the whole system, cf. Section~\ref{sect:kinetic_description.x} and Figure~\ref{fig:betacdf}.

\begin{figure}[!t]
    \centering
    \subfigure[]{\includegraphics[width=.35\textwidth]{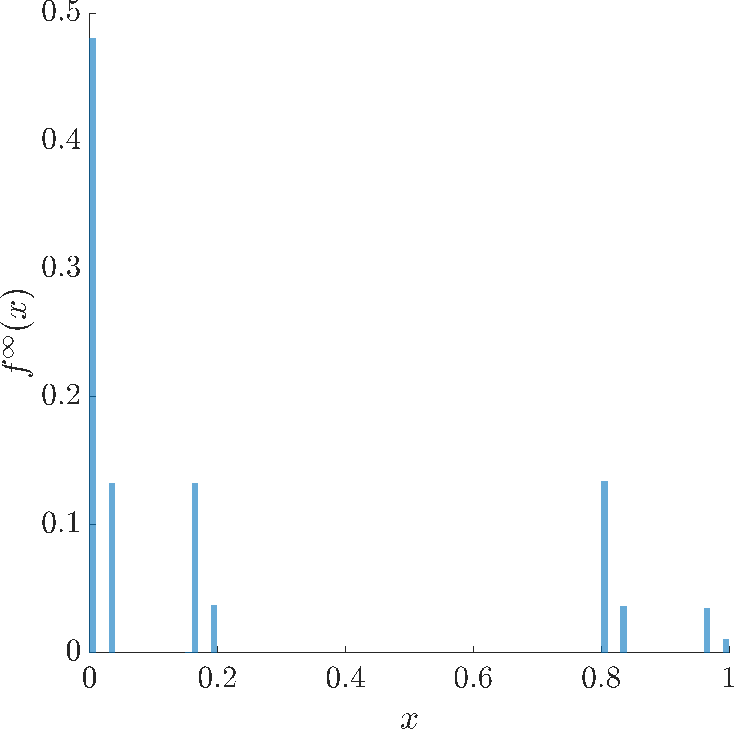} \label{fig:as_stabile_cluster}}
    \qquad
    \subfigure[]{\includegraphics[width=.35\textwidth]{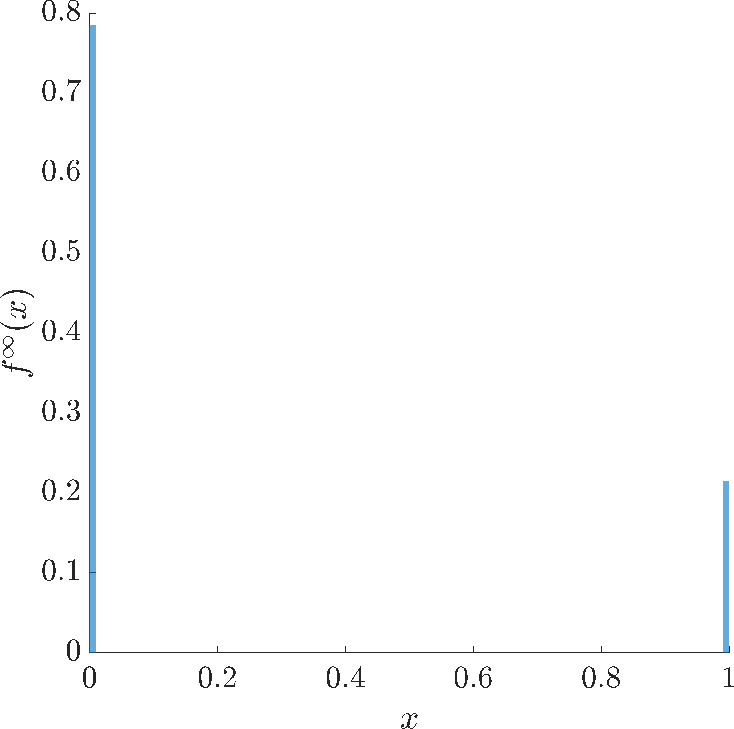} \label{fig:as_stabile_2peaks}}
    \caption{Asymptotic awareness distribution for $Y\sim\operatorname{Beta}(0.5,\,0.3)$ when $\alpha$ satisfies~\eqref{ass:alpha}. In particular: (a) $\alpha<1$, (b) $\alpha=1$}
    \label{fig:as_stabile}
\end{figure}

We begin by fixing $a=0.5$, $b=0.3$, that correspond to a case in which an asymptotically stable mean awareness $m_X^\infty\in (0,\,1)$ exists, cf. Figure~\ref{fig:betacdf_stable}. In particular, we compute numerically that $m_X^{\infty}\approx 0.215$. According to Corollary~\ref{cor:asymptotics.x}, if $\alpha$ satisfies~\eqref{ass:alpha}, which in this case yields $\alpha>0.695$ for $s=1$, the system converges in time to the Maxwellian univocally characterised by~\eqref{eq:hatf}. This is clearly shown in Figure~\ref{fig:as_stabile_cluster}, that we have obtained with $\alpha=0.8$: the asymptotic distribution of the awareness clusters in specific points with specific weights as predicted by the theory, cf.~\eqref{eq:x.infty}. In particular, we have numerically:
\begin{equation*}
	\begin{aligned}
		x^\infty_1 &= 0 \\
		x^\infty_2 &= \alpha=0.8 \\
		x^\infty_3 &= \alpha(1-\alpha)=0.16  \\
		x^\infty_4 &= \alpha(1-\alpha)^2=0.032 \\
		x^\infty_5 &= \alpha+\alpha(1-\alpha)=0.96 \\
		x^\infty_6 &= \alpha+\alpha(1-\alpha)^2=0.832\\
		x^\infty_7 &= \alpha(1-\alpha)+\alpha(1-\alpha)^2=0.192 \\
		x^\infty_8 &= \alpha+\alpha(1-\alpha)+\alpha(1-\alpha)^2=0.992
	\end{aligned}
	\begin{aligned}
		&\vphantom{
			\begin{aligned}
				x^\infty_1 &= 0
			\end{aligned}
		} \quad\quad\ \text{with weight } \left(1-G(m_X^\infty)\right)^3\approx 0.484 \\
		&\left.
		\vphantom{
			\begin{aligned}
				x^\infty_2 &= \alpha \\
				x^\infty_3 &= \alpha(1-\alpha) \\
				x^\infty_4 &= \alpha(1-\alpha)^2
			\end{aligned}
		}\right\} \quad \text{with weight } G(m_X^\infty)\left(1-G(m_X^\infty)\right)^2\approx 0.132 \\
		&\left.
		\vphantom{
			\begin{aligned}
				x^\infty_5 &= \alpha+\alpha(1-\alpha) \\
				x^\infty_6 &= \alpha+\alpha(1-\alpha)^2 \\
				x^\infty_7 &= \alpha(1-\alpha)+\alpha(1-\alpha)^2
			\end{aligned}
		}\right\} \quad \text{with weight } G^2(m_X^\infty)\left(1-G(m_X^\infty)\right)\approx 0.036 \\
		&\vphantom{
			\begin{aligned}
				x^\infty_8 &= \alpha+\alpha(1-\alpha)+\alpha(1-\alpha)^2
			\end{aligned}
		} \quad\quad\ \text{with weight } G^3(m_X^\infty)\approx 0.01.
	\end{aligned}
	\label{eq:x.infty.values}
\end{equation*}

\begin{remark}
In principle, the distribution featuring these clusters and weights should be an approximation of the actual Maxwellian $f^\infty$, since it is deduced by considering only the first three terms of an infinite convolution. Nevertheless, numerically one observes precisely this approximate Maxwellian with no additional clusters. The reason is that the terms $f_k^\infty$ for $k\geq 3$, which are neglected in the convolution, are so close to $\delta_0$ that, in the numerical approximation, they are indistinguishable from $\delta_0$. From~\eqref{eq:dist.fk-d0}, using the parameters of the simulation in Figure~\ref{fig:as_stabile_cluster}, i.e. $\alpha=0.8$ and $m_X^\infty=G(m_X^\infty)\approx 0.215$, we compute indeed:
$$ d_1(f_k^\infty,\delta_0)\leq d_1(f_3^\infty,\delta_0)\leq 0.215\cdot 0.8\cdot 0.2^3\approx 1.4\cdot 10^{-3}, \qquad \forall\,k\geq 3. $$
Hence, in the numerical simulation only the clusters arising from the convolution of $f_k^{\infty}$ for $k=0,\,1,\,2$ emerge.
\end{remark}

With $\alpha=1$ we obtain instead that the asymptotic awareness distribution clusters only in the two extreme points $x=0,\,1$ as portrayed by Figure~\ref{fig:as_stabile_2peaks} and confirmed analytically by~\eqref{eq:f^inf.x}.

\begin{figure}[!t]
    \centering
    \subfigure[]{\includegraphics[width=.35\textwidth]{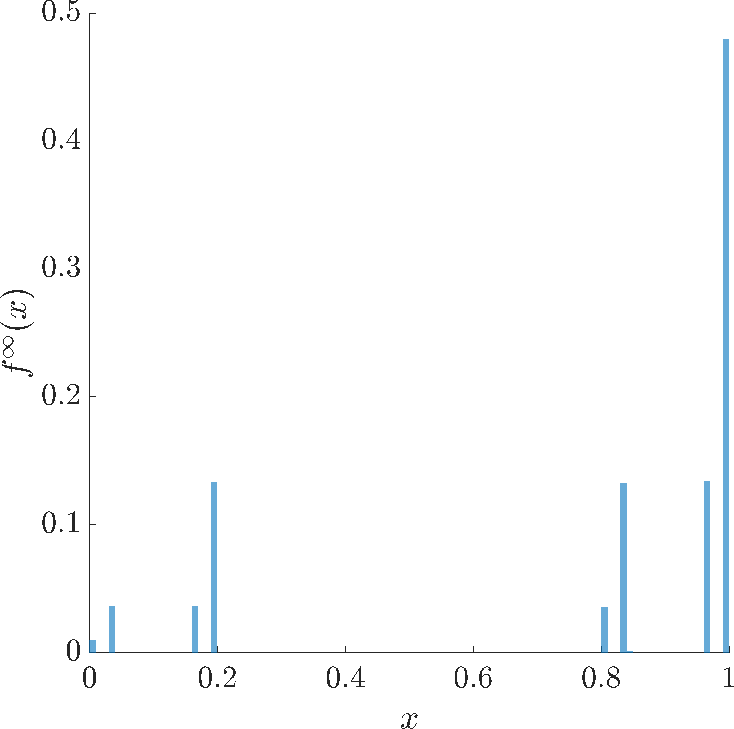} \label{fig:cluster_0.7}}
    \qquad
    \subfigure[]{\includegraphics[width=.35\textwidth]{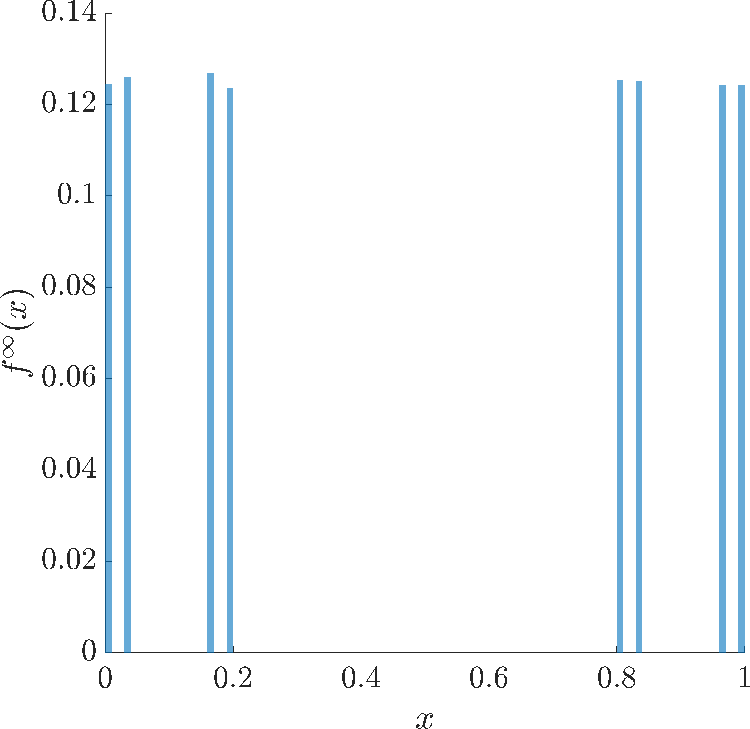} \label{fig:cluster_0.5}}
    \caption{Asymptotic awareness distribution for (a) $Y\sim\operatorname{Beta}(0.3,\,0.5)$ ($m_X^\infty=0.785$, cf. Figure~\ref{fig:betacdf_stable}) and (b) $Y\sim\operatorname{Beta}(0.5,\,0.5)$ ($m_X^\infty=0.5$, cf. Figure~\ref{fig:betacdf_stable})}
    \label{fig:cluster_diversa_h}
\end{figure}

Changing the parameters $a$, $b$ of the probability density function $g$ of $Y$ we experiment how the asymptotic awareness distribution is affected by different distributions of the reliability of news. In Figure~\ref{fig:cluster_diversa_h} we show the $f^\infty$'s computed numerically in correspondence of the other two choices of $a$, $b$ displayed in Figure~\ref{fig:betacdf_stable}. We notice, in particular, that $f^\infty$ still clusters in the same points as before, indeed from~\eqref{eq:x.infty} we see that such points are not affected by $g$. Nevertheless, the height of the peaks, viz. the percentage of individuals in the various awareness clusters, varies consistently with the dependence of the weights in~\eqref{eq:x.infty} on the cumulative distribution function $G$ of $Y$.

\begin{figure}[!t]
    \centering
    \subfigure[]{\includegraphics[width=.35\textwidth]{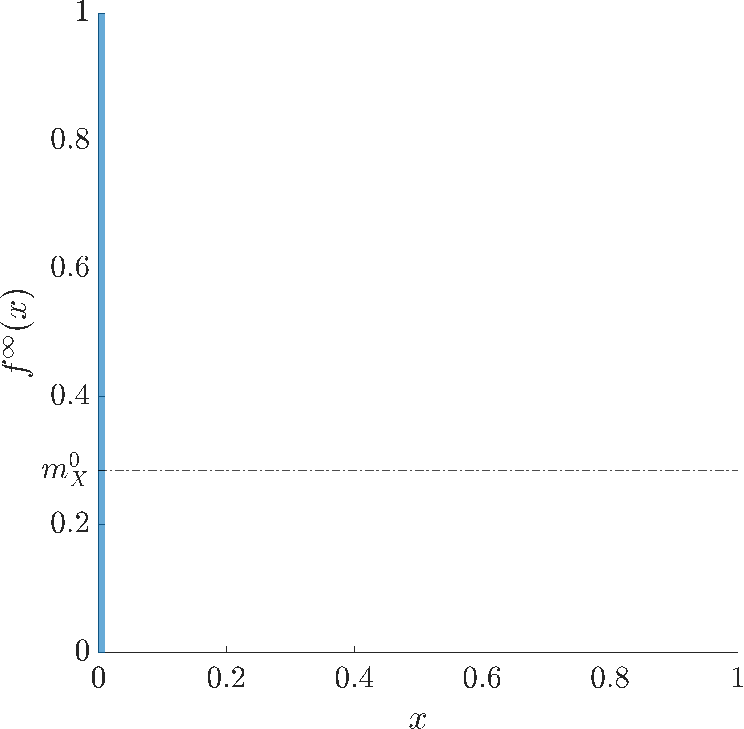} \label{fig:unstable_0}}
    \qquad
    \subfigure[]{\includegraphics[width=.35\textwidth]{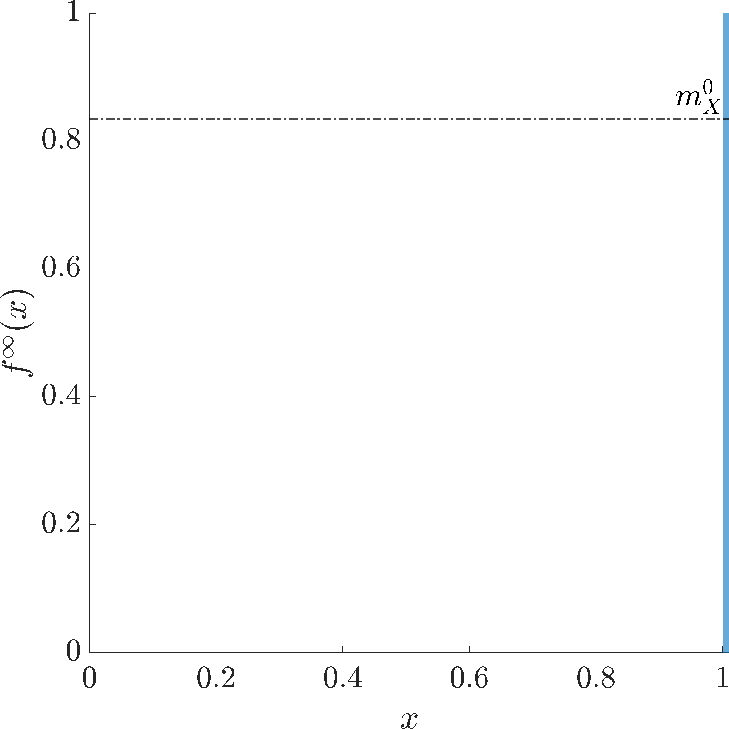} \label{fig:unstable_1}}
    \caption{Asymptotic awareness distribution for $Y\sim\operatorname{Beta}(2,\,2)$ ($m_X^\infty=0.5$ unstable, cf. Figure~\ref{fig:betacdf_no_stable}) and: (a) $m_X^0<0.5$; (b) $m_X^0>0.5$}
    \label{fig:unstable}
\end{figure}

In Figure~\ref{fig:unstable} we show instead the asymptotic awareness distribution reached when the distribution of $Y$ is such that $m_X^\infty\in (0,\,1)$ is unstable. This happens e.g., by fixing $a=b=2$, which produces $m_X^\infty=0.5$, cf. Figure~\ref{fig:betacdf_no_stable}. In this case, as predicted by the theory, $f^\infty$ clusters on either $x=0$ if $m_X^0<0.5$, cf. Figure~\ref{fig:unstable_0}, or $x=1$ if $m_X^0>0.5$, cf. Figure~\ref{fig:unstable_1}.

\subsection{Inferring the reliability of news from real data}
Bringing the concept of reliability of news to a quantitative basis, by assigning to it numerical values on a scale from $0$ to $1$, may appear an abstraction for speculative theoretical purposes. Instead, in this section we suggest a possible conceptual method to estimate such a fundamental input from real data. We stress however that we are not going to elaborate on real data. On the contrary, we assume that a database of news is available, whose contents have already been flagged as ``fake'' or ``non-fake'' according to some classification criterion that we do not discuss here.

It is reasonable to argue that the probability of identifying a fake news depends significantly on the news topic, due to various factors linked to the nature and context of the disseminated information. Several arguments support this intuition: for instance, trending topics or topics of high public interest (such as e.g., political events, celebrity news, health crises) are more likely to attract fake information. This is because they have a higher potential for virality, which makes them attractive targets for spreading misinformation to gain attention or to influence the public opinion. Conversely, less discussed topics may suffer from fewer fake news because misinformation campaigns are less motivated to focus on them. In case of topics requiring specialised knowledge, such as e.g., scientific research or medical advice, fake information may be more prevalent and harder to detect as the general public may lack the expertise to discern factual accuracy. Moreover, topics such as disasters, scandals, and controversial issues are particularly prone to fake news because emotional engagement can override critical thinking. In view of all of this, we aim to define the reliability of news taking into account that the probability of bumping into fake news is intimately correlated to the news topic.

\emph{Clustering} is a powerful tool for identifying topic clusters, due to its ability to handle large, high-dimensional datasets, discover hidden patterns, and provide a method for grouping similar news based on the content. It is not our purpose to go into the details of the algorithms typically used for text clustering, as many books exist delving into data mining with applications to text data as well, see e.g.,~\cite{DataMiningBOOK,SocialMiningBOOK}. For completeness, we confine ourselves to outlining the main conceptual steps at the basis of the definition of ``proximity'' of textual data:
\begin{itemize}
\item \textit{Text preprocessing}: The first step in text clustering is preprocessing raw text data. This involves tasks such as breaking text into words or tokens (tokenisation), removing stop words (common words such as ``the'', ``and'', ``is'', that do not carry significant information), and reducing words
to their root form (stemming).
\item \textit{TF-IDF vectorisation}: Once the text is preprocessed, each news is transformed into a numerical representation using the Term Frequency-Inverse
Document Frequency (TF-IDF) method. TF-IDF is a statistical measure used to evaluate the importance of a word in a text relative to a collection of texts. The TF component measures the frequency of a word in a text, while the IDF component measures the inverse frequency of the word across all texts. The resulting TF-IDF scores reflect how significant a word is in a particular text while mitigating the influence of frequently occurring common words.
\item \textit{K-means clustering}: After transforming the text data into TF-IDF vectors, the $k$-means clustering algorithm is applied, which partitions the data into $k$ (fixed) clusters and assigns each news to the cluster with the nearest mean (centroid). The algorithm proceeds as follows:
\begin{itemize}
\item \textit{Initialisation}: select $k$ initial centroids randomly from the dataset;
\item \textit{Assignment}: assign each news to the nearest centroid based on the chosen distance (e.g., the Euclidean distance);
\item \textit{Update}: recalculate the centroids as the mean of all news assigned to each cluster;
\item \textit{Convergence}: repeat the ``Assignment'' and ``Update'' steps until the centroids no longer change significantly or a maximum number of iterations
is reached.
\end{itemize}
\end{itemize}
Finally, the topic of each cluster can be inferred by examining the words with higher value withing the cluster.

With this cluster identification, to each news about a certain topic (viz. within a certain cluster) we may associate a \textit{value of reliability} in the range $[0,\,1]$ defined as
$$ y:=\Prob(\text{true}\cap\text{topic})=\Prob(\text{true}\vert\text{topic})\cdot\Prob(\text{topic}), $$
where:
\begin{enumerate}[label=(\roman*)]
\item $\Prob(\text{true}\vert\text{topic})$ is the probability that the news is true conditional to the chosen topic (viz. cluster); it can be estimated empirically as the ratio between the number of non-fake news in the cluster and the total number of news in the same cluster;
\item $\Prob(\text{topic})$ is the probability to run into that topic, which can be estimated empirically as the ratio between the number of news in the cluster and the total number of news in all clusters.
\end{enumerate}

Since the clustering procedure identifies $k$ clusters, in this way we generate $k$ different values of reliability $y_1,\,y_2,\,\dots,\,y_k\in [0,\,1]$. From them, we estimate the probability distribution of the random variable $Y$ as
$$ \Prob(Y=y):=\sum_{\text{topic}\,:\,\Prob(\text{true}\cap\text{topic})=y}\Prob(\text{topic}), \qquad
    y\in\{y_1,\,y_2,\,\dots,\,y_k\}. $$

As mentioned at the beginning, this procedure assumes that news in the database are already classified in ``true'' or ``false'' according to some criterion, so that, after clustering, it is possible to evaluate $\Prob(\text{true}\vert\text{topic})$ straightforwardly. Nevertheless, we stress that such a classification is a by far non-trivial further issue because the definition itself of ``fake news'' is inevitably partly ambiguous. 

Moreover, we observe that a \textit{uniform} trend of $Y$, apart from being pleasant for speculative theoretical purposes (cf. Section~\ref{sect:Y.unif}), may be expected qualitatively from the proposed method in realistic scenarios. For instance, in the case of datasets collecting news from different sources over a short period of time; or in the case of datasets centred around a single topic due to e.g., filters on the keywords.

\section{Rise and fall of the popularity of fake news}
\label{sect:popularity}
The dissemination of fake news depends on the ability/inability of the individuals to recognise them as unreliable pieces of information and to decide consequently whether to share them or not with the others. In this section, we propose a simple model to investigate the trends of the popularity of fake news arising from exchanges among the individuals such as those taking place on social networks. Like in Section~\ref{sect:awareness}, we consider interactions between agents and news but in this case the agents have to be understood as a background for the spread of news depending on their inclination to repost more or less reliable news. In this context, agents are not only characterised by their awareness $x\in [0,\,1]$ but also by a variable $c\in\R_+$ representing their \textit{connectivity}, viz. a measure of their number of contacts in the social network. On the other hand, news is not only characterised by the reliability $y\in [0,\,1]$ but also by the \textit{degree of popularity} $v\in\R_+$, whose evolution in time we aim to investigate statistically.

We assume that the degree of popularity evolves according to the following elementary rule:
\begin{equation}
	v'=(1-\mu)v+P(x,y,c).
	\label{eq:v'}
\end{equation}
In this formula, inspired by~\cite{toscani2018PRE}, $\mu\in (0,\,1)$ is the physiological decay rate of the popularity of news whereas $P$ is the posting function, which determines an increase in the popularity of a certain content if the latter is reposted by the users of the social network. In more detail, $P$ depends on the awareness $x$ of an individual interacting with news with reliability $y$ and on the connectivity $c$ of that individual. Specifically, we set:
\begin{equation}
	P(x,y,c)=\nu c\chi(x\leq y+\beta(x)),
	\label{eq:P}
\end{equation}
where $\beta:[0,\,1]\to\R_+$ is a function expressing the gap between the awareness of the individual and the reliability of news which induces the former to repost a content even if the content is not perceived as completely trustworthy. Notice indeed that~\eqref{eq:P} allows for $P\neq 0$ whenever $y\geq x-\beta(x)$, thus in particular with a reliability $y$ possibly strictly less than the individual awareness $x$. In other words, $\beta(x)$ can be understood as the propensity of an individual with awareness $x$ to consciously trust news which they know to be not completely true. For technical purposes, we assume that the function $x-\beta(x)$ is invertible in $[0,\,1]$, which is the case if e.g., $\beta$ is non-increasing in $[0,\,1]$. Notice that this is a reasonable assumption also from the modelling point of view.

Finally, in~\eqref{eq:P} the increase in the popularity of a reposted content is $\nu c$, where $\nu>0$ is a parameter, i.e., it is proportional to the connectivity of the reposting individual. The rationale is that the higher the number of connections of an individual the larger the pool of social network users reached by reposted news.

\begin{proposition} \label{prop:v'}
Rule~\eqref{eq:v'}-\eqref{eq:P} is physically admissible for every $\mu\in (0,\,1)$, $\nu>0$, $\beta\geq 0$, i.e. $v'\in\R_+$ for all $v,\,c\in\R_+$ and all $x,\,y\in [0,\,1]$.
\end{proposition}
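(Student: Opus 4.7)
The plan is to verify directly that $v'$, as defined by~\eqref{eq:v'}-\eqref{eq:P}, is the sum of two manifestly non-negative contributions under the stated hypotheses. There is no subtle structural argument here; unlike Proposition~\ref{prop:x'}, where one had to exploit the convex-combination form of the learning rule to stay inside $[0,\,1]$, here we only need $v'\geq 0$ with no upper bound, so the verification reduces to a sign check.

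First I would decompose $v'=(1-\mu)v+P(x,y,c)$ and treat the two summands separately. For the first, since $\mu\in (0,\,1)$ we have $1-\mu\in (0,\,1)$, and $v\in\R_+$ gives $(1-\mu)v\geq 0$. For the second, I would use the factorisation
\[
    P(x,y,c)=\nu\,c\,\chi(x\leq y+\beta(x)),
\]
observing that $\nu>0$ by assumption, $c\geq 0$ because $c\in\R_+$, and the characteristic function $\chi(\cdot)$ takes values in $\{0,\,1\}$. Hence $P(x,y,c)\in\{0,\,\nu c\}\subseteq\R_+$ for every admissible $(x,y,c)$, regardless of the specific form of $\beta$, provided $\beta\geq 0$ (in fact, here even $\beta\geq 0$ is not needed for non-negativity of $P$, only for the modelling interpretation).

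Adding the two non-negative contributions yields $v'\geq 0$, which is the claim. The main (and really only) point worth stressing is that, in contrast with Proposition~\ref{prop:x'}, no cancellation or invertibility issue arises, because the popularity variable lives in the unbounded cone $\R_+$ rather than in a bounded interval; accordingly there is no upper-bound inequality to be checked, and the assumption $x-\beta(x)$ invertible on $[0,\,1]$, which will be crucial later for the change-of-variables computations in the kinetic equation, plays no role in this consistency check.
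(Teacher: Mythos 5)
Your proof is correct and coincides with the paper's own argument, which simply observes that $v'$ is the sum of non-negative terms; you merely spell out the sign check for each summand in detail. Your aside that $\beta\geq 0$ is not actually needed for non-negativity of $P$ is accurate, since $\chi(\cdot)\in\{0,\,1\}$ regardless of the sign of $\beta$.
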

\begin{proof}
Straightforward, observing that $v'$ is the sum of non-negative terms.
\end{proof}

\subsection{Kinetic description and trend to equilibrium}
Let $p=p(v,y,t):\R_+\times [0,\,1]\times [0,\,+\infty)\to\R_+$ be the probability density function of news with popularity $v$ and reliability $y$ at time $t$. Invoking the same principles from~\cite{pareschi2013BOOK}, which already led to the kinetic equation~\eqref{eq:Boltzmann.x}, we write a Boltzmann-type equation in weak form for the evolution of $p$ under the interaction rule~\eqref{eq:v'}-\eqref{eq:P}:
\begin{multline}
	\frac{d}{dt}\int_0^1\int_0^{+\infty}\Phi(v,y)p(v,y,t)\,dv\,dy \\
		=\int_0^{+\infty}\int_0^1\int_0^1\int_0^{+\infty}\left(\Phi(v',y)-\Phi(v,y)\right)p(v,y,t)f(x,t)C(c)\,dv\,dy\,dx\,dc,
	\label{eq:Boltzmann.p}
\end{multline}
where $\Phi:\R_+\times [0,\,1]\to\C$ is an arbitrary observable quantity (test function). Proposition~\ref{prop:v'} ensures that the term $\phi(v',c)$ is well-defined and that
$$ \supp{p(\cdot,\cdot,t)}\subseteq\R_+\times [0,\,1] $$
for all $t>0$ if it is so at $t=0$.

In~\eqref{eq:Boltzmann.p}, the probability density function of the users of the social network who interact with the news is $f(x,t)C(c)$, where we are assuming statistical independence between the awareness $x \in [0,1]$ and the connectivity $c \in \R_+$ at each time $t\geq 0$. Specifically, $C\in\cP(\R_+)$ is the probability distribution of the connectivity of the users of the social network. Consistently, we assume the normalisation condition $\int_0^{+\infty}C(c)\,dc=1$; moreover, we denote by
$$ m_C:=\int_0^{+\infty}cC(c)\,dc $$
the mean connectivity of the users of the social network and by
$$ E_C:=\int_0^{+\infty}c^2C(c)\,dc $$
the energy (second moment) of the connectivity distribution. Notice that we are implicitly considering a \textit{static} social network, i.e., one in which the connections among the individuals do not change in time, because $C$ is independent of $t$. By fixing a statistical big picture of the network topology, this simplification allows us to investigate the impact of networked interactions on the spread of fake news regardless of inessential local network rewiring.

From~\eqref{eq:Boltzmann.p} we may easily obtain an evolution equation for the conditional probability distribution of the popularity $V$ given the reliability $Y$ of news, say $p_y=p_y(v,t):\R_+\times [0,\,+\infty)\to\R_+$, by invoking the disintegration theorem of a measure. Specifically, writing
$$ p(v,y,t)=p_y(v,t)\otimes g(y), $$
where $g$ is the probability distribution of $Y$ introduced in Section~\ref{sect:awareness}, and plugging this into~\eqref{eq:Boltzmann.p} with the choice $\Phi(v,y)=\phi(v)\psi(y)$ we get, owing to the arbitrariness of $\psi$,
$$ \frac{d}{dt}\int_0^{+\infty}\phi(v)p_y(v,t)\,dv=\int_0^{+\infty}\int_0^1\int_0^{+\infty}\left(\phi(v')-\phi(v)\right)p_y(v,t)f(x,t)C(c)\,dv\,dx\,dc. $$
Unlike $p$, the conditional probability distribution $p_y$ provides a closer perspective on possibly different trends of the popularity depending on the various levels of reliability of news. Therefore, from now on we will focus invariably on $p_y$.

As a further simplification, we assume that the awareness distribution $f$ may be replaced by its asymptotic profile $f^\infty$, which amounts to considering the learning process of Section~\ref{sect:awareness} much quicker than the reposting dynamics which shape the popularity of news. In other words, we imagine that the reposting of news takes place on an already consolidated background of awareness of the social network users. This may not be always true in practice but such an approximation allows for a deeper analytical understanding of the model while being certainly reasonable at least in selected scenarios. In particular, among all possible awareness equilibrium distributions discussed in Section~\ref{sect:Maxwellian.x}, we stick to the one corresponding to $\alpha=1$ as it can be given an explicit analytical representation, cf.~\eqref{eq:f^inf.x}.

On the whole, we consider therefore the following weak Boltzmann-type equation for $p_y$:
\begin{equation}
	\frac{d}{dt}\int_0^{+\infty}\phi(v)p_y(v,t)\,dv=\int_0^{+\infty}\int_0^1\int_0^{+\infty}\left(\phi(v')-\phi(v)\right)p_y(v,t)f^\infty(x)C(c)\,dv\,dx\,dc,
	\label{eq:Boltzmann.p_y}
\end{equation}
where $\phi:\R_+\to\C$ is arbitrary and $f^\infty=\left(1-G(m_X^\infty)\right)\delta_0+G(m_X^\infty)\delta_1$. By means of techniques analogous to those employed in Theorem~\ref{theo:exist_uniq}, cf. Appendix~\ref{sect:proof} and~\cite{carrillo2007RMUP}, it is possible to prove that also~\eqref{eq:Boltzmann.p_y} admits a unique solution $p_y(t)=p_y(\cdot,t)\in\cP(\R_+)$, $t>0$, in correspondence of any initial condition $p_y^0\in\cP(\R_+)$.

With $\phi(v)=v$ we study the time evolution of the mean popularity of news with reliability $y$:
$$ m_{V\vert y}(t):=\int_0^{+\infty}vp_y(v,t)\,dv. $$
Specifically, from~\eqref{eq:Boltzmann.p_y} and taking~\eqref{eq:v'},~\eqref{eq:P} into account we get
$$ \dot{m}_{V\vert y}=-\mu m_{V\vert y}+\nu m_C\bigl(1-G(m_X^\infty)\chi(y<1-\beta(1))\bigr), $$
whence
$$ m_{V\vert y}(t)=e^{-\mu t}m_{V\vert y}^0+\frac{\nu m_C\bigl(1-G(m_X^\infty)\chi(y<1-\beta(1))\bigr)}{\mu}\left(1-e^{-\mu t}\right). $$
Therefore, we see that $m_{V\vert y}$ converges exponentially fast to
\begin{equation}
	m_{V\vert y}^\infty:=
	\begin{cases}
		\dfrac{\nu}{\mu}m_C\bigl(1-G(m_X^\infty)\bigr) & \text{if } y<1-\beta(1) \\[5mm]
		\dfrac{\nu}{\mu}m_C & \text{if } y\geq 1-\beta(1)
	\end{cases}
	\label{eq:m_V|y.inf}
\end{equation}
when $t\to +\infty$. We notice that only in the case of poorly reliable news, i.e. for $y<1-\beta(1)$, the asymptotic mean popularity $m_{V\vert y}^\infty$ depends on the mean awareness $m_X^\infty$ of the population. On the contrary, in the case of sufficiently reliable news, i.e., for $y\geq 1-\beta(1)$, $m_{V\vert y}^\infty$ is independent of $m_X^\infty$. Moreover, according to this model poorly reliable news reaches always a lower asymptotic popularity than sufficiently reliable news, indeed $1-G(m_X^\infty)<1$. Nevertheless, while in the ideal regime $\beta\equiv 0$, i.e., when no one reposts a content recognised as unreliable, only fully reliable news ($y=1$) attain the maximum asymptotic popularity $\frac{\nu}{\mu}m_C$, for $\beta\not\equiv 0$ also part of non-completely reliable news reaches, in the long run, the maximum popularity. In all cases, $m_{V\vert y}^\infty$ is proportional to the mean connectivity $m_C$ of the individuals, which shows explicitly the impact of the background network on the average trend of the popularity.

\begin{remark} In~\eqref{eq:m_V|y.inf}, the watershed between poorly and sufficiently reliable news turns out to be the value $1-\beta(1)$ of the reliability $y$. This value has a meaningful modelling interpretation: since $\beta(1)$ is the level of falseness that a fully aware individual, i.e. one with awareness $x=1$, agrees to tolerate when reposting contents, any news with reliability at least $1-\beta(1)$ is necessarily indistinguishable from a completely true content, i.e., one with $y=1$.
\end{remark}

Letting now $\phi(v)=v^2$ in~\eqref{eq:Boltzmann.p_y} we study the energy of the popularity of news with reliability $y$:
$$ E_{V\vert y}(t):=\int_0^{+\infty}v^2p_y(v,t)\,dv. $$
In particular, we obtain
$$ \dot{E}_{V\vert y}=-\mu(2-\mu)E_{V\vert y}+\nu\left(2(1-\mu)m_Cm_{V\vert y}+\nu E_C\right)\left(1-G(m_X^\infty)\chi(y<1-\beta(1))\right), $$
which, since $m_{V\vert y}\to m_{V\vert y}^\infty$ exponentially fast for $t\to +\infty$, implies that $E_{V\vert y}$ converges to
$$ E_{V\vert y}^\infty:=\frac{\nu}{\mu(2-\mu)}\left(2(1-\mu)m_Cm_{V\vert y}^\infty+\nu E_C\right)\left(1-G(m_X^\infty)\chi(y<1-\beta(1))\right) $$
when $t\to +\infty$. More explicitly, using~\eqref{eq:m_V|y.inf} we write
$$ E_{V\vert y}^\infty=
	\begin{cases}
		\dfrac{\nu^2}{\mu(2-\mu)}\left(\dfrac{2(1-\mu)}{\mu}m_C^2(1-G(m_X^\infty))+E_C\right)(1-G(m_X^\infty)) & \text{if } y<1-\beta(1) \\[5mm]
		\dfrac{\nu^2}{\mu(2-\mu)}\left(\dfrac{2(1-\mu)}{\mu}m_C^2+E_C\right) & \text{if } y\geq 1-\beta(1), \\
	\end{cases} $$
whence we observe that also the asymptotic second moment of the conditional popularity distribution $p_y$ depends on the mean awareness $m_X^\infty$ of the population only in the case of poorly reliable news.

Interestingly, by computing the asymptotic variance
$$ \sigma^{2,\infty}_{V\vert y}:=E_{V\vert y}^\infty-{(m_{V\vert y}^\infty)}^2 $$
of the conditional popularity distribution we discover
$$ \sigma^{2,\infty}_{V\vert y}=
	\begin{cases}
		\dfrac{\nu^2(1-G(m_X^\infty))}{\mu(2-\mu)}\left(E_C-(1-G(m_X^\infty))m_C^2\right) & \text{if } y<1-\beta(1) \\[5mm]
		\dfrac{\nu^2}{\mu(2-\mu)}\left(E_C-m_C^2\right) & \text{if } y\geq 1-\beta(1),
	\end{cases} $$
which, noting that $\sigma^2_C:=E_C-m_C^2$ is the variance of the connectivity distribution of the social network, implies that for $y\geq 1-\beta(1)$, i.e. in the case of sufficiently reliable news, $\sigma^{2,\infty}_{V\vert y}$ is proportional to $\sigma^2_C$, whereas for $y<1-\beta(1)$, i.e. in the case of poorly reliable news, $\sigma^{2,\infty}_{V\vert y}$ is bounded below by $\sigma^2_C$ as
$$ \sigma^{2,\infty}_{V\vert y}\geq\frac{\nu^2(1-G(m_X^\infty))}{\mu(2-\mu)}\sigma^2_C. $$
Therefore, independently of the reliability $y$, the heterogeneity of the connectivity distribution of the social network shapes the variability of the popularity asymptotically reached by news. This result stresses once again explicitly the impact of the background network on the emergent statistics of the popularity of contents shared by the individuals.

In particular, if $\sigma^2_C>0$, i.e. if the connectivity is non-constant among the individuals, then also $\sigma^{2,\infty}_{V\vert y}>0$ for every $y\in [0,\,1]$. This indicates that the process of popularity formation modelled by~\eqref{eq:v'},~\eqref{eq:P} may give rise to non-trivial asymptotic distributions.

\subsubsection{Identification of the Maxwellian}
Proceeding like in Section~\ref{sect:Maxwellian.x}, we may characterise completely the equilibrium distribution of the popularity $p_y^\infty$ by switching to the Fourier-transformed version of~\eqref{eq:Boltzmann.p_y}. With $\phi(v)=e^{-i\xi v}$ we obtain, in particular,
$$ \partial_t\hat{p}_y(\xi,t)=\left[(1-G(m_X^\infty))\hat{C}(\nu\xi)+G(m_X^\infty)\hat{C}(\nu\chi(y\geq 1-\beta(1))\xi)\right]\hat{p}_y((1-\mu)\xi,t)-\hat{p}_y(\xi,t), $$
which, letting
\begin{align}
	\begin{aligned}[b]
	K_y(\xi) &:= (1-G(m_X^\infty))\hat{C}(\nu\xi)+G(m_X^\infty)\hat{C}(\nu\chi(y\geq 1-\beta(1))\xi) \\
	&\phantom{:}=
		\begin{cases}
			(1-G(m_X^\infty))\hat{C}(\nu\xi)+G(m_X^\infty) & \text{if } y<1-\beta(1) \\
			\hat{C}(\nu\xi) & \text{if } y\geq 1-\beta(1),
		\end{cases}
	\end{aligned}
	\label{eq:K_y}
\end{align}
we rewrite compactly as
\begin{equation}
	\partial_t\hat{p}_y(\xi,t)=K_y(\xi)\hat{p}_y((1-\mu)\xi,t)-\hat{p}_y(\xi,t).
	\label{eq:Fourier.p_y}
\end{equation}
The stationary distribution $p_y^\infty$ fulfils then the recursive relationship
$$ \hat{p}_y^\infty(\xi)=K_y(\xi)\hat{p}_y^\infty((1-\mu)\xi), $$
whence, by an argument analogous to the one developed in the proof of Proposition~\ref{prop:Maxwellian.x}, we deduce
\begin{equation}
	\hat{p}_y^\infty(\xi)=\prod_{k=0}^{\infty}K_y((1-\mu)^k\xi).
	\label{eq:hatp}
\end{equation}

Since $\vert\hat{C}(\xi)\vert\leq 1$ for all $\xi\in\R$ (as a general property of the Fourier transform of a probability distribution), it results also $\abs{K_y(\xi)}\leq 1$ for all $\xi\in\R$, whence
$$ \abs{\hat{p}_y^\infty(\xi)}\leq\abs{K_y(\xi)}=\vert\hat{C}(\nu\xi)\vert $$
for $y\geq 1-\beta(1)$. Consequently,
\begin{align*}
	\normHm{p_y^\infty}{m}^2 &= \int_{\R}{(1+\xi^2)}^m\abs{\hat{p}_y^\infty(\xi)}^2\,d\xi\leq\int_{\R}{(1+\xi^2)}^m\vert\hat{C}(\nu\xi)\vert^2\,d\xi \\
	&\leq \frac{1}{\nu^{2m+1}}\int_{\R}{(1+\xi^2)}^m\vert\hat{C}(\xi)\vert^2\,d\xi \\
	&= \frac{1}{\nu^{2m+1}}\normHm{C}{m}^2, \qquad m\in\N,
\end{align*}
which says $p_y^\infty\in H^m(\R_+)$ whenever $C\in H^m(\R_+)$. In other words, for $y\geq 1-\beta(1)$, i.e. for sufficiently reliable news, a smooth connectivity distribution entails an analogously smooth asymptotic distribution of the popularity in the sense of Sobolev regularity. The same may instead not be true in general for $y<1-\beta(1)$, i.e. for poorly reliable news, as in this case $K_y$ is in general not proportional to $\hat{C}$.

From~\eqref{eq:K_y},~\eqref{eq:hatp} we infer that $p_y^\infty$ depends on $m_X^\infty$ only in the case of poorly reliable news ($y<1-\beta(1)$), whereas for sufficiently reliable news ($y\geq 1-\beta(1)$) it is independent of it. This consolidates in more generality what we already observed about the asymptotic mean popularity and the variance of the conditional popularity distribution. The physical interpretation of this fact is clear: contents which, on the whole, are considered reliable enough reach statistically a popularity independent of the collective social awareness against fake news. Conversely, the statistical profile of the popularity of debated contents is affected by the collective ability to recognise such contents as poorly reliable. For instance, in the limit case $m_X^\infty=0$, which describes a society completely prone to fake news, it results $G(m_X^\infty)=0$ and hence $K_y(\xi)=\hat{C}(\nu\xi)$ also for $y<1-\beta(1)$, therefore poorly reliable contents reach asymptotically the same popularity distribution as sufficiently reliable ones. Conversely, in the opposite limit case $m_X^\infty=1$, which depicts a society well immunised against fake news, it results $G(m_X^\infty)=1$. Consequently, for $y<1-\beta(1)$ we have $K_y(\xi)=1$, whence, owing to~\eqref{eq:hatp}, $\hat{p}_y^\infty(\xi)=1$, i.e. $p_y^\infty=\delta_0$. Therefore, in this case poorly reliable contents are spontaneously discarded by the society in the long run.

We now prove that the Maxwellian given by~\eqref{eq:hatp} is the unique equilibrium distribution of~\eqref{eq:Boltzmann.p_y}.
\begin{theorem}
\label{theo:asymptotics.v_y}
Let $y\in [0,\,1]$ be fixed. Any two solutions $p_y(t),\,q_y(t)\in\cP(\R_+)$ to~\eqref{eq:Boltzmann.p_y} issuing from respective initial conditions $p_y^0,\,q_y^0\in\cP(\R_+)$ are such that
$$ d_s(p_y(t),q_y(t))\leq d_s(p_y^0,q_y^0)e^{-[1-(1-\mu)^s]t}, \qquad t>0, $$
for every $s\in (0,\,1]$. Thus, in particular,
$$ \lim_{t\to +\infty}d_s(p_y(t),q_y(t))=0. $$
\end{theorem}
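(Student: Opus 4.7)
The plan is to mimic the Fourier-transform argument used for Theorem~\ref{theo:asymptotics.x-Y.unif}, which is even cleaner here because the kernel $K_y$ in~\eqref{eq:K_y} depends only on $y$ (fixed) and $\xi$, not on the solution itself (the background $f^\infty$ and $C$ are prescribed and time-independent). In particular, unlike in the proof of Theorem~\ref{theo:ds}, there is no extra term arising from a difference of means to be controlled.

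First, I would apply~\eqref{eq:Fourier.p_y} to both $\hat{p}_y$ and $\hat{q}_y$ and subtract, obtaining
\[
	\partial_t(\hat{p}_y-\hat{q}_y)+(\hat{p}_y-\hat{q}_y)=K_y(\xi)\bigl(\hat{p}_y((1-\mu)\xi,t)-\hat{q}_y((1-\mu)\xi,t)\bigr).
\]
Dividing by $\abs{\xi}^s$ and setting $h(\xi,t):=(\hat{p}_y(\xi,t)-\hat{q}_y(\xi,t))/\abs{\xi}^s$, multiplying by $e^t$ and integrating over $[0,\,t]$ gives the Duhamel representation
\[
	e^th(\xi,t)=h(\xi,0)+\int_0^te^\tau K_y(\xi)\frac{\hat{p}_y((1-\mu)\xi,\tau)-\hat{q}_y((1-\mu)\xi,\tau)}{\abs{\xi}^s}\,d\tau.
\]

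Next, I would invoke the bound $\abs{K_y(\xi)}\leq 1$, which follows at once from~\eqref{eq:K_y} since $\lvert\hat{C}\rvert\leq 1$ and $G(m_X^\infty)\in[0,\,1]$, together with the standard rescaling property
\[
	\sup_{\xi\in\R\setminus\{0\}}\frac{\abs{\hat{p}_y((1-\mu)\xi,\tau)-\hat{q}_y((1-\mu)\xi,\tau)}}{\abs{\xi}^s}=(1-\mu)^s\norminf{h(\tau)},
\]
obtained via the change of variable $\eta=(1-\mu)\xi$. Taking absolute values and then the supremum over $\xi\in\R\setminus\{0\}$ in the Duhamel identity then yields
\[
	\norminf{e^th(t)}\leq\norminf{h(0)}+(1-\mu)^s\int_0^t\norminf{e^\tau h(\tau)}\,d\tau.
\]

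Finally, Gr\"onwall's inequality applied to $t\mapsto\norminf{e^th(t)}$ gives $\norminf{e^th(t)}\leq\norminf{h(0)}e^{(1-\mu)^st}$, i.e.,
\[
	d_s(p_y(t),q_y(t))=\norminf{h(t)}\leq d_s(p_y^0,q_y^0)e^{-[1-(1-\mu)^s]t},
\]
and the limit statement follows since $1-(1-\mu)^s>0$ for every $\mu\in(0,\,1)$ and $s\in(0,\,1]$. I do not anticipate a genuine obstacle: the only point that deserves explicit comment is that the coefficient $K_y$ can be pulled out of the subtraction intact (no residual ``mean-difference'' term appears), so the argument is strictly simpler than that of Theorem~\ref{theo:ds} and there is no need to impose a smallness condition on $\mu$.
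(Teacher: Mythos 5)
Your proposal is correct and follows essentially the same route as the paper: the paper likewise subtracts the Fourier-transformed equations~\eqref{eq:Fourier.p_y}, divides by $\abs{\xi}^s$, and then explicitly defers to the Duhamel--Gr\"onwall argument of Theorem~\ref{theo:asymptotics.x-Y.unif} combined with the bound $\abs{K_y(\xi)}\leq 1$, exactly the steps you carry out in full. Your observation that $K_y$ factors out cleanly because it is independent of the solution (so no mean-difference remainder as in Theorem~\ref{theo:ds} arises) is precisely the point the paper makes when it notes the coefficient can be collected.
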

\begin{proof}
Taking the difference between the Fourier-transformed versions of~\eqref{eq:Boltzmann.p_y}, cf.~\eqref{eq:Fourier.p_y}, satisfied by $\hat{p}_y$ and $\hat{q}_y$ and dividing by $\abs{\xi}^s$ we get
$$ \partial_t\frac{\hat{p}_y-\hat{q}_y}{\abs{\xi}^s}=K_y(\xi)\frac{\hat{p}_y((1-\mu)\xi,t)-\hat{q}_y((1-\mu)\xi,t)}{\abs{\xi}^s}
	-\frac{\hat{p}_y-\hat{q}_y}{\abs{\xi}^s}. $$
At this point, the thesis follows arguing like in the proof of Theorem~\ref{theo:asymptotics.x-Y.unif} and considering additionally that, as already noticed, $\abs{K_y(\xi)}\leq 1$ for all $\xi\in\R$.
\end{proof}

\begin{corollary}
For every fixed $y\in [0,\,1]$, every solution $p_y(t)\in\cP(\R_+)$ to~\eqref{eq:Boltzmann.p_y} converges asymptotically in time to the Maxwellian defined by the Fourier transform~\eqref{eq:hatp}.
\end{corollary}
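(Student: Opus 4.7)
The plan is to mimic the argument used for Corollary~\ref{cor:asymptotics.x}: exhibit $p_y^\infty$ as a (time-independent) solution of~\eqref{eq:Boltzmann.p_y} and then apply Theorem~\ref{theo:asymptotics.v_y} with that solution as one of the two candidates.

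First, I would verify that the function $\hat{p}_y^\infty$ defined by the infinite product~\eqref{eq:hatp} is indeed the Fourier transform of a probability measure $p_y^\infty\in\cP(\R_+)$. The key points are: (i) since $C\in\cP(\R_+)$ one has $\abs{\hat{C}(\xi)}\leq 1$, hence $\abs{K_y(\xi)}\leq 1$ for every $\xi\in\R$, so each partial product is bounded by $1$; (ii) since $0<1-\mu<1$, the arguments $(1-\mu)^k\xi$ contract to $0$ geometrically, so that $K_y((1-\mu)^k\xi)=1+O((1-\mu)^k|\xi|)$ as $k\to\infty$, ensuring convergence of the infinite product to a continuous function with value $1$ at $\xi=0$. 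This reproduces verbatim the first half of the argument of Proposition~\ref{prop:Maxwellian.x}, so only the regularity/positivity of the associated measure needs to be invoked, in the same spirit as for $f^\infty$.

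Second, a direct substitution in the Fourier-transformed equation~\eqref{eq:Fourier.p_y} shows that $\hat p_y^\infty$ satisfies $K_y(\xi)\hat p_y^\infty((1-\mu)\xi)-\hat p_y^\infty(\xi)=0$, because~\eqref{eq:hatp} gives
\[
K_y(\xi)\hat p_y^\infty((1-\mu)\xi)=K_y(\xi)\prod_{k=0}^{\infty}K_y((1-\mu)^{k+1}\xi)=\prod_{k=0}^{\infty}K_y((1-\mu)^k\xi)=\hat p_y^\infty(\xi).
\]
Hence $p_y^\infty$ is a stationary solution of~\eqref{eq:Boltzmann.p_y}.

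Third, given any solution $p_y(t)\in\cP(\R_+)$ to~\eqref{eq:Boltzmann.p_y} issuing from an initial datum $p_y^0\in\cP(\R_+)$, I would apply Theorem~\ref{theo:asymptotics.v_y} with the choice $q_y(t):=p_y^\infty$ for every $t\geq 0$, obtaining
\[
d_s(p_y(t),p_y^\infty)\leq d_s(p_y^0,p_y^\infty)e^{-[1-(1-\mu)^s]t},\qquad t>0,
\]
for any $s\in(0,\,1]$. Since $1-(1-\mu)^s>0$ for every $\mu\in(0,\,1)$ and $s\in(0,\,1]$, the right-hand side vanishes as $t\to+\infty$, yielding the claim. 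The only genuine verification is the one in the first paragraph: that~\eqref{eq:hatp} actually defines a bona fide element of $\cP(\R_+)$. This is essentially the same technical point already handled in the awareness setting and, since the Fourier distance $d_s$ metrises narrow convergence on $\cP(\R_+)$ when moments up to order $[s]$ coincide, no further machinery is needed.
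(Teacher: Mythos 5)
Your proposal matches the paper's proof exactly: the paper simply takes $q_y(v,t)=p_y^\infty(v)$, noted to be a constant-in-time solution of~\eqref{eq:Boltzmann.p_y}, in Theorem~\ref{theo:asymptotics.v_y}. Your additional verifications (that~\eqref{eq:hatp} defines a fixed point of the stationary relation, mirroring Proposition~\ref{prop:Maxwellian.x}, and that it is a bona fide element of $\cP(\R_+)$) are correct and merely make explicit what the paper leaves implicit.
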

\begin{proof}
It suffices to take $q_y(v,t)=p_y^\infty(v)$, which is a constant-in-time solution to~\eqref{eq:Boltzmann.p_y}, in Theorem~\ref{theo:asymptotics.v_y}.
\end{proof}

\subsubsection{Popularity tails}
Since $\supp{p_y(\cdot,t)}\subseteq\R_+$, the question of the characterisation of the tail of $p_y$ arises. In particular, it is interesting to establish whether a \textit{fat tail} forms in $p_y$. We recall that $p_y$ is said to be \textit{fat-tailed} if, for $t>0$ fixed and $w>0$ large, there exists $r>0$ such that
\begin{equation}
	\int_w^{+\infty}p_y(v,t)\,dv\sim w^{-r}.
	\label{eq:Pareto_index}
\end{equation}
The exponent $r$ is called the \textit{Pareto index} of the distribution from the name of the Italian economist Vilfredo Pareto, who, at the beginning of the 20th century, observed empirically the polynomial decay of the tail of wealth distribution curves in western societies. More recently, several studies based on mathematical tools affine to those of the present work succeeded in explaining theoretically Pareto's observations. They also established precise analytical conditions linking the formation of a fat tail to the microscopic characteristics of the trading, see e.g.,~\cite{bisi2009CMS,cordier2005JSP,duering2009RMUP,gualandi2018ECONOMICS} and references therein. Notice that the smaller the Pareto index of a distribution the fatter the tail of the latter.

In the context of the popularity of news, a fat tail of $p_y$ indicates a non-negligible probability that a content with reliability $y$ becomes highly popular in time, cf.~\cite{toscani2018PRE}. ``Non-negligible'' has to be meant in comparison with the typical behaviour of systems of classical physics, whose statistical distributions decay to zero in general exponentially.

From~\eqref{eq:Pareto_index} we notice that if, at a certain time $t>0$, $p_y$ has a fat tail with Pareto index $r$ then $p_y(v,t)\sim v^{-(r+1)}$ when $v\to +\infty$. Consequently, from order $r$ onwards the statistical moments of $p_y$ blow to $+\infty$. The non-finiteness of some statistical moments provides an effective criterion to identify the formation of a fat tail in $p_y$.

To investigate this issue it is convenient to refer again to the Fourier-transformed version~\eqref{eq:Fourier.p_y} of~\eqref{eq:Boltzmann.p_y}, recalling the following relationship between the generic moment $M_n$ of order $n\in\N$ of $p_y$ and the Fourier transform $\hat{p}_y$:
$$ M_y^{(n)}(t):=\int_0^{+\infty}v^np_y(v,t)\,dv=i^n\partial_\xi^n\hat{p}_y(0,t). $$
Taking the $n$-th order $\xi$-derivative of~\eqref{eq:Fourier.p_y} and applying the Leibniz rule to the first term on the right-hand side we obtain
$$ \partial_t\partial_\xi^n\hat{p}_y(\xi,t)=\sum_{k=0}^{n}\binom{n}{k}\partial_\xi^{n-k}K_y(\xi)(1-\mu)^k\partial_\xi^k\hat{p}_y((1-\mu)\xi,t)
	-\partial_\xi^n\hat{p}_y(\xi,t). $$
Next, multiplying both sides by $i^n$ and evaluating in $\xi=0$ we discover
\begin{align*}
	\frac{dM_y^{(n)}}{dt} &= \sum_{k=0}^{n}\binom{n}{k}\partial_\xi^{n-k}K_y(0)(1-\mu)^ki^{n-k}M_y^{(k)}-M_y^{(n)} \\
	&= -\bigl(1-K_y(0)(1-\mu)^n\bigr)M_y^{(n)}+\sum_{k=0}^{n-1}\binom{n}{k}\partial_\xi^{n-k}K_y(0)(1-\mu)^ki^{n-k}M_y^{(k)}.
\end{align*}
From~\eqref{eq:K_y} we see that $K_y(0)=1$ for all $y\in [0,\,1]$. Moreover, for $k<n$ it results $i^{n-k}\partial_\xi^{n-k}K_y(\xi)\propto(i\nu)^{n-k}\partial_\xi^{n-k}\hat{C}(\nu\xi)$, thus $i^{n-k}\partial_\xi^{n-k}K_y(0)\propto\nu^{n-k}M_C^{(n-k)}$, where $M_C^{(j)}$ is the $j$-th moment of the connectivity distribution $C$. The proportionality constant is
$$ \cS_y:=
	\begin{cases}
		1-G(m_X^\infty) & \text{if } y<1-\beta(1) \\
		1 & \text{if } y\geq 1-\beta(1).
	\end{cases} $$
On the whole,
\begin{equation}
	\frac{dM_y^{(n)}}{dt}=-\bigl(1-(1-\mu)^n\bigr)M_y^{(n)}+\cS_y\sum_{k=0}^{n-1}\binom{n}{k}\nu^{n-k}(1-\mu)^kM_C^{(n-k)}M_y^{(k)}
	\label{eq:M_y^n}
\end{equation}
provides a triangular system of equations for the time evolution of all statistical moments of $p_y$. Thanks to~\eqref{eq:M_y^n} we can prove:
\begin{theorem}
Assume $p_y^0(v)=p_y(v,0)$ has finite moments of any order, i.e.
$$ \int_0^{+\infty}v^mp_y^0(v)\,dv<+\infty, \quad \forall\,m\in\N. $$
Then $p_y$ develops a fat tail if and only if the connectivity distribution $C$ is fat-tailed. In this case, the Pareto index of $p_y$ is the same as that of $C$.
\label{theo:tail}
\end{theorem}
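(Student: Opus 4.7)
The approach is to exploit the triangular linear structure of the moment system~\eqref{eq:M_y^n}, whose $n$-th equation expresses $\dot{M}_y^{(n)}$ as a linear damping term plus a source involving only the lower-order moments of $p_y$ and the first $n$ moments of $C$. The plan is to prove, by induction on $n\in\N$, that $M_y^{(n)}(t)$ is finite for every $t>0$ if and only if $M_C^{(n)}$ is finite. Given the assumption that $p_y^0$ has finite moments of every order, this equivalence identifies the critical integer beyond which the moments of $p_y$ blow up with the corresponding critical integer for $C$, and hence—via the standard link between tail decay and moment finiteness encoded in~\eqref{eq:Pareto_index}—yields the equality of the two Pareto indices.

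For the ``if'' direction, I would apply the variation-of-constants formula to~\eqref{eq:M_y^n}, writing
$$M_y^{(n)}(t)=M_y^{(n)}(0)e^{-[1-(1-\mu)^n]t}+\cS_y\int_0^te^{-[1-(1-\mu)^n](t-\tau)}F_n(\tau)\,d\tau,$$
where $F_n(\tau):=\sum_{k=0}^{n-1}\binom{n}{k}\nu^{n-k}(1-\mu)^kM_C^{(n-k)}M_y^{(k)}(\tau)$. The inductive hypothesis yields local boundedness of $M_y^{(k)}(\tau)$ for $k<n$; combined with the finiteness of $M_C^{(1)},\dots,M_C^{(n)}$, this makes $F_n$ locally bounded and therefore $M_y^{(n)}(t)<+\infty$ for every $t>0$. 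For the converse, let $n^\ast$ be the smallest $n$ with $M_C^{(n)}=+\infty$; the previous step ensures $M_y^{(k)}(\tau)<+\infty$ for $k<n^\ast$, so the term $k=0$ inside $F_{n^\ast}(\tau)$, namely $\nu^{n^\ast}M_C^{(n^\ast)}M_y^{(0)}(\tau)=+\infty$, drives $M_y^{(n^\ast)}(t)=+\infty$ for every $t>0$.

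The delicate step is making the converse rigorous: strictly, the moment equation~\eqref{eq:M_y^n} was derived under the implicit assumption that both sides are finite, so one cannot simply plug $+\infty$ into the source. I would circumvent this by a truncation/monotone argument. Replace $C$ with its restriction $C_R$ to $[0,\,R]$ renormalised to a probability distribution; then $C_R$ has finite moments of any order, the associated solution $p_y^R$ of~\eqref{eq:Boltzmann.p_y} has finite moments given by~\eqref{eq:M_y^n}, and the explicit expression for $M_y^{(n^\ast),R}(t)$ grows without bound as $R\to+\infty$ because $M_{C_R}^{(n^\ast)}\to+\infty$ by monotone convergence. Using the stability furnished by Theorem~\ref{theo:asymptotics.v_y} to justify the limit $p_y^R\to p_y$ in the Fourier metric, one deduces that the $n^\ast$-th moment of the limiting $p_y$ is genuinely infinite. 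The main obstacle is precisely this interchange of the $R\to+\infty$ limit with the computation of moments, which requires a careful use of lower semicontinuity of moments under weak convergence of probability measures; once that is handled, the matching of Pareto indices follows at once from~\eqref{eq:Pareto_index}.
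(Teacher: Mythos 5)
Your overall strategy is the paper's: an induction along the triangular moment hierarchy~\eqref{eq:M_y^n}, with the variation-of-constants formula and the damping rate $1-(1-\mu)^n>0$ propagating finiteness of $M_y^{(n)}$ as long as $M_C^{(1)},\dots,M_C^{(n)}$ are finite, and the first infinite moment of $C$ identified as the first infinite moment of $p_y$, whence the equality of Pareto indices via~\eqref{eq:Pareto_index}. One small remark on the ``slim'' direction: you only claim local boundedness of the lower-order moments, whereas the paper extracts from the very same Duhamel formula bounds $\cM_j$ that are \emph{uniform in time}; since the slim-tail conclusion should persist as $t\to+\infty$ (in particular for the equilibrium $p_y^\infty$), you should state the uniform bound, which your formula gives at no extra cost because $\int_0^te^{-[1-(1-\mu)^n](t-\tau)}\,d\tau\leq\bigl(1-(1-\mu)^n\bigr)^{-1}$.

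The genuine gap is in your rigorisation of the converse. You are right that the paper's step ``$M_C^{(n-k)}=+\infty$, consequently $M_y^{(n)}=+\infty$'' is formal, but your proposed repair fails at its key inequality: for probability measures $\mu_R\rightharpoonup\mu$ and $\phi\geq 0$ continuous, Portmanteau/Fatou gives $\int\phi\,d\mu\leq\liminf_R\int\phi\,d\mu_R$, i.e.\ lower semicontinuity bounds the moment of the \emph{limit} from above by the liminf of the truncated moments. When the right-hand side is $+\infty$ this is vacuous, and mass escaping to infinity can indeed carry all the moment away: $(1-1/R)\delta_0+R^{-1}\delta_{R^2}\rightharpoonup\delta_0$ while the first moments diverge. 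So $M_y^{(n^\ast),R}(t)\to+\infty$ does not, by itself, force $M_y^{(n^\ast)}(t)=+\infty$. In addition, Theorem~\ref{theo:asymptotics.v_y} compares two solutions of the \emph{same} equation, so it does not deliver $p_y^R\to p_y$ when the background is changed from $C_R$ to $C$; that would require a separate continuous-dependence estimate on $K_y$. Two correct repairs: (i) truncate the test function instead of the background: with $\phi_L(v)=\min(v^{n^\ast},L)$ in~\eqref{eq:Boltzmann.p_y}, since $v'\geq\nu c\,\chi(x\leq y+\beta(x))$ and $\phi_L$ is non-decreasing, one obtains $\frac{d}{dt}\int_0^{+\infty}\phi_L\,p_y\,dv\geq\cS_yA_L-\int_0^{+\infty}\phi_L\,p_y\,dv$ with $A_L:=\int_0^{+\infty}\min((\nu c)^{n^\ast},L)\,C(c)\,dc$, whence $\int_0^{+\infty}\phi_L\,p_y(v,t)\,dv\geq(1-e^{-t})\cS_yA_L$; letting $L\to+\infty$, monotone convergence gives $A_L\uparrow\nu^{n^\ast}M_C^{(n^\ast)}=+\infty$ and $M_y^{(n^\ast)}(t)=+\infty$ for every $t>0$ (note that both this argument and the paper's implicitly use $\cS_y>0$, i.e.\ $G(m_X^\infty)<1$ when $y<1-\beta(1)$); or (ii) keep your truncation of $C$ but replace weak convergence by monotone coupling: $C_R$ is stochastically dominated by $C$ and the interaction $v'=(1-\mu)v+\nu c\,\chi$ is non-decreasing in $(v,c)$, so a pathwise coupling yields $M_y^{(n^\ast),R}(t)\leq M_y^{(n^\ast)}(t)$ for every $R$, and $R\to+\infty$ concludes. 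With either fix your argument becomes a complete, indeed more careful, version of the paper's proof.
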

\begin{proof}
Let $C$ be slim-tailed, so that $M_C^{(m)}<+\infty$ for all $m\in\N$. Assume then, by induction, that the first $n$ moments of $p_y$ are uniformly bounded in time, i.e. that there exist constants $\cM_j>0$ such that $M_y^{(j)}(t)\leq\cM_j$ for all $t>0$, $j=0,\,\dots,\,n$. We show that also the $(n+1)$-th moment of $p_y$ is uniformly bounded in time. Indeed, owing to~\eqref{eq:M_y^n}, the equation for $M_y^{(n+1)}$ turns out to be:
$$ \frac{dM_y^{(n+1)}}{dt}=-\bigl(1-(1-\mu)^{n+1}\bigr)M_y^{(n+1)}+\cS_y\sum_{k=0}^{n}\binom{n}{k}\nu^{n+1-k}(1-\mu)^kM_C^{(n+1-k)}M_y^{(k)}, $$
whence, multiplying both sides by $e^{\left(1-(1-\mu)^{n+1}\right)t}$ and integrating over $[0,\,t]$, $t>0$, we get
\begin{align*}
	M_y^{(n+1)}(t) &= e^{-\left(1-(1-\mu)^{n+1}\right)t}M_y^{(n+1)}(0) \\
	&\phantom{=} +\cS_y\sum_{k=0}^{n}\binom{n}{k}\nu^{n+1-k}(1-\mu)^kM_C^{(n+1-k)}\int_0^te^{-\left(1-(1-\mu)^{n+1}\right)(t-s)}M_y^{(k)}(s)\,ds.
\end{align*}
Therefore, since $1-(1-\mu)^{n+1}>0$,
\begin{align*}
	M_y^{(n+1)}(t) &\leq M_y^{(n+1)}(0) \\
	&\phantom{\leq} +\cS_y\sum_{k=0}^{n}\binom{n}{k}\nu^{n+1-k}(1-\mu)^kM_C^{(n+1-k)}\cM_k\int_0^te^{-\left(1-(1-\mu)^{n+1}\right)(t-s)}\,ds \\
	&= M_y^{(n+1)}(0) \\
	&\phantom{=} +\frac{\cS_y}{1-(1-\mu)^{n+1}}\sum_{k=0}^{n}\binom{n}{k}\nu^{n+1-k}(1-\mu)^kM_C^{(n+1-k)}\cM_k\left(1-e^{-\left(1-(1-\mu)^{n+1}\right)t}\right) \\
	&\leq M_y^{(n+1)}(0)+\frac{\cS_y}{1-(1-\mu)^{n+1}}\sum_{k=0}^{n}\binom{n}{k}\nu^{n+1-k}(1-\mu)^kM_C^{(n+1-k)}\cM_k=:\cM_{n+1}
\end{align*}
with $\cM_{n+1}<+\infty$ because of the boundedness of $M_y^{(n+1)}(0)$ and of all moments of $C$. Since the inductive assumption is clearly met for $n=0$, because $M_y^{(0)}(t)=\int_{\R_+}p_y(v,t)\,dv=1$ for all $t>0$, we conclude that all moments of $p_y$ of any order are uniformly bounded in time, hence that $p_y$ is slim-tailed.

Conversely, let $C$ be fat-tailed with Pareto index $r>0$, hence $M_C^{(j)}<+\infty$ for $j<r$ while $M_C^{(j)}=+\infty$ for $j\geq r$. From~\eqref{eq:M_y^n} we see that if $n<r$ then, at the right-hand side, $M_C^{(n-k)}<+\infty$ for all $k=0,\,\dots,\,n-1$. Therefore, arguing like before, we conclude that $M_y^{(n)}$ is uniformly bounded in time. On the contrary, if $n\geq r$ then $M_C^{(n-k)}=+\infty$ for all $k=0,\,\dots,\, n-r$ and consequently $M_y^{(n)}=+\infty$. Hence $p_y$ develops a fat tail with Pareto index $r$.
\end{proof}

\subsection{Numerical tests}
In this section, we provide numerical evidence of the result of Theorem~\ref{theo:tail}, namely that the connectivity distribution $C$ drives the formation of either fat- or slim-tailed popularity distributions $p_y$.

As a prototype of a fat-tailed probability distribution, we consider for $C$ an inverse-gamma distribution of parameters $a,\,b>0$, say $C\sim\operatorname{Inv-Gamma}(a,\,b)$, i.e.:
$$ C(c)=\frac{b^a}{\Gamma(a)}\cdot\frac{e^{-\frac{b}{c}}}{c^{a+1}}, \qquad c>0, $$
where $\Gamma(\cdot)$ denotes the gamma function. Since $C(c)\sim\frac{b^a}{\Gamma(a)}c^{-(a+1)}$ when $c\to +\infty$, this distribution features a fat tail with Pareto index $a$. We notice that a fat-tailed connectivity distribution may model the presence of \textit{influencers} in the social networks, i.e. users who, with non-negligible probability, may have a considerably large number of connections.

\begin{figure}[!t]
    \centering
    \subfigure[]{\includegraphics[width=.45\textwidth]{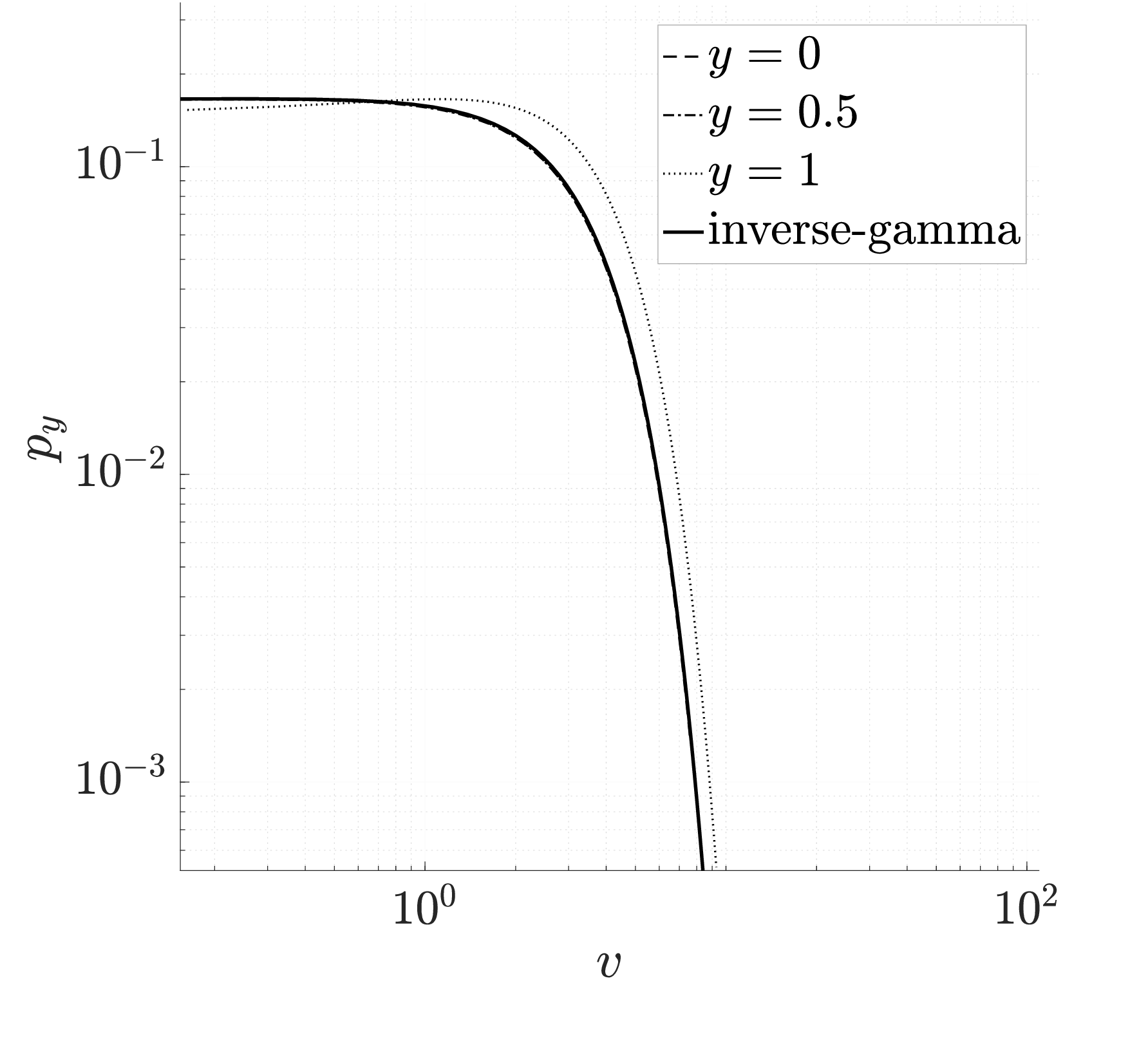} \label{fig:invgamma_tail-a1}}
    \qquad
    \subfigure[]{\includegraphics[width=.45\textwidth]{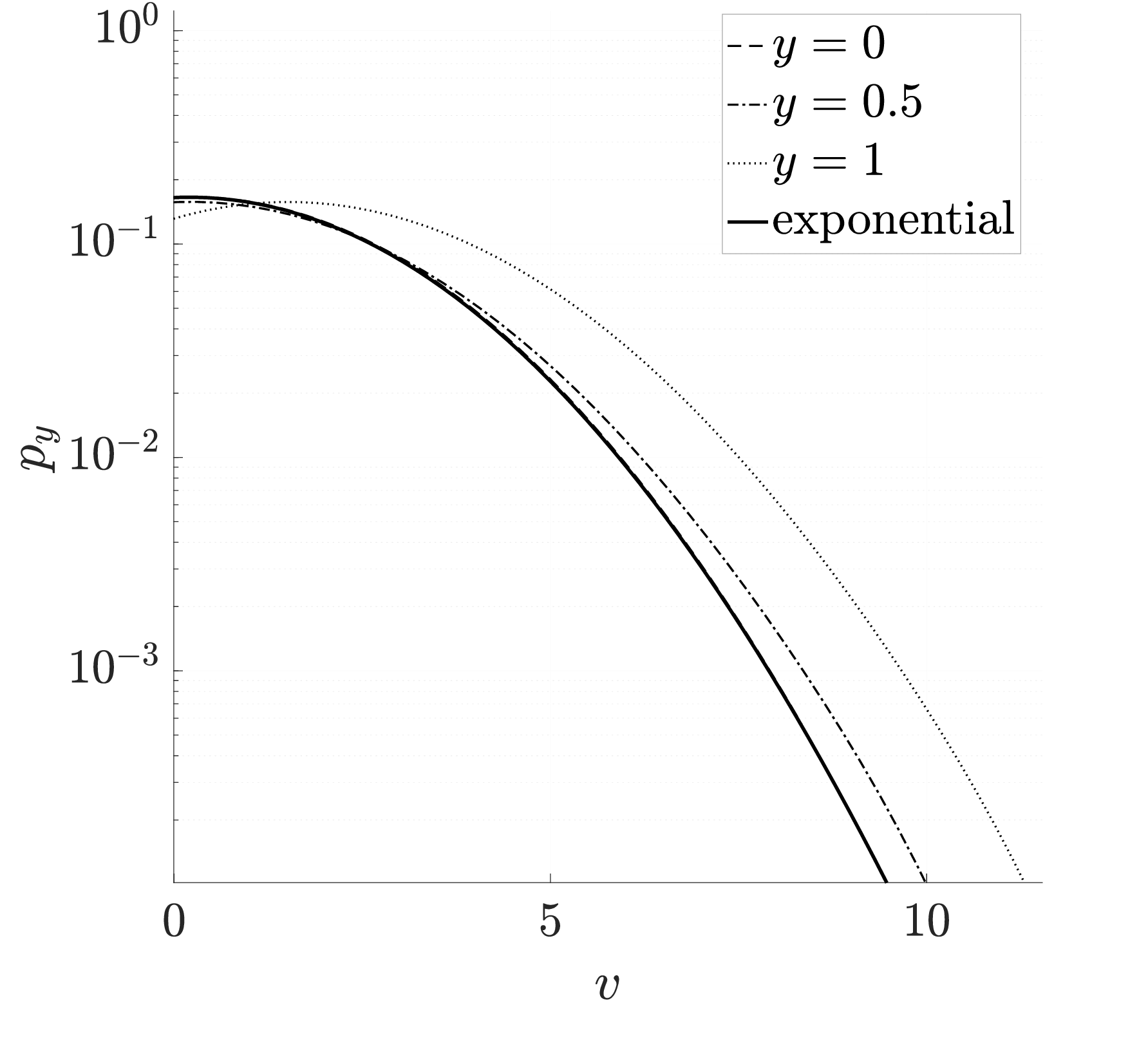} \label{fig:exp_tail-a1}}
    \caption{Popularity tails for $\alpha=1$ and: (a) $C\sim\operatorname{Inv-Gamma}(10,\,0.5)$, log scale; (b) $C\sim\operatorname{Exp}(5)$, log-linear scale}
    \label{fig:popularity_tails-a1}
\end{figure}

In Figure~\ref{fig:invgamma_tail-a1} we report some numerical solutions of~\eqref{eq:Boltzmann.p_y} with interaction rules~\eqref{eq:v'}-\eqref{eq:P} and $\beta\equiv 0.2$ constant obtained with a Monte Carlo particle algorithm. We consider, in particular, three levels of increasing reliability of news, $y=0,\,0.5,\,1$. In all cases, we observe clearly in log scale that the tail forming in the distribution $p_y$ follows the profile of the tail of $C$, hence it is fat with the same Pareto index as $C$, as expected from Theorem~\ref{theo:tail}. We also notice that the higher the reliability of news the higher the popularity that such news tends to gain. In particular, the popularity profile of non-completely reliable news ($y<1$) nearly coincides with that of the connections.

Conversely, as a prototype of a slim-tailed probability distribution we consider for $C$ an exponential distribution with parameter $a>0$, i.e. $C\sim\operatorname{Exp}(a)$,
$$ C(c)=ae^{-ac}, \qquad c\geq 0. $$
This distribution depicts a scenario of substantial absence of influencers in the social network, as the probability that a user has a large number of connections gets rapidly negligible.

With the same parameters and values of reliability of news as before, we show in Figure~\ref{fig:exp_tail-a1} the Monte Carlo numerical solution of~\eqref{eq:Boltzmann.p_y}. The slim (exponential) tail of $p_y$ is clearly visible by direct comparison with that of $C$ in log-linear scale, consistently with Theorem~\ref{theo:tail}. Moreover, also in this case we observe that the more reliable the news the higher the popularity it tends to gain and that completely unreliable news ($y=0$) develops a popularity profile which sticks closely to that of the connectivity distribution.

\begin{figure}[!t]
    \centering
    \subfigure[]{\includegraphics[width=.4\textwidth]{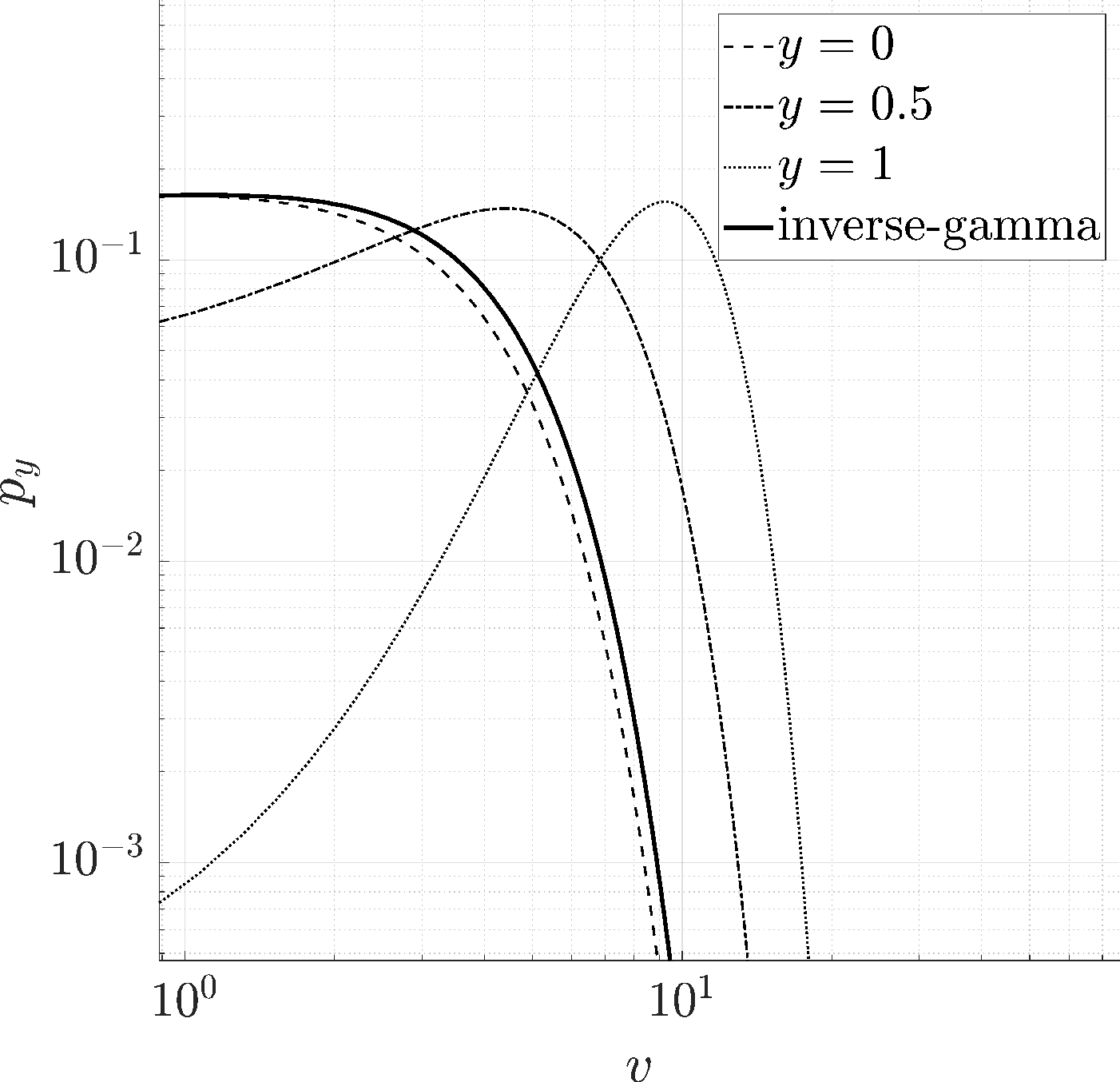} \label{fig:invgamma_tail}}
    \qquad
    \subfigure[]{\includegraphics[width=.4\textwidth]{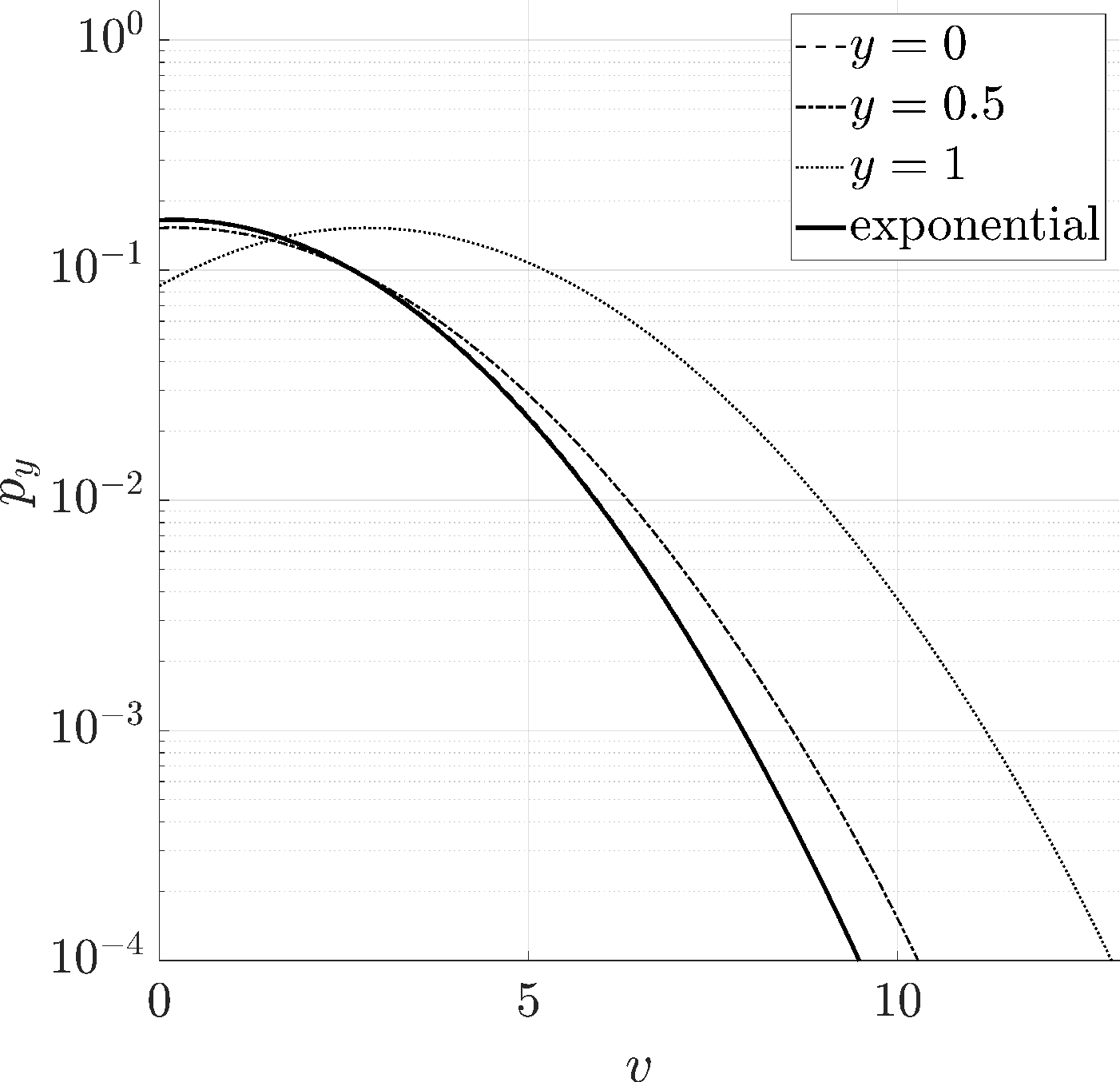} \label{fig:exp_tail}}
    \caption{Popularity tails for $\alpha<1$ fulfilling condition~\ref{ass:alpha} and: (a) $C\sim\operatorname{Inv-Gamma}(10,\,0.5)$, log scale; (b) $C\sim\operatorname{Exp}(5)$, log-linear scale}
    \label{fig:popularity_tails}
\end{figure}

Although analytically we considered only the case $\alpha=1$, the observations above still hold in the more general case $\alpha<1$ with condition~\eqref{ass:alpha} fulfilled. For instance, in Figure~\ref{fig:popularity_tails} we show the numerical results for the case $\alpha=0.8$, whence it is evident that the slim tail scenario virtually coincides with the corresponding one of the case $\alpha=1$, cf. Figures~\ref{fig:exp_tail-a1},~\ref{fig:exp_tail}. On the other hand, in the fat tail scenario, cf. Figure~\ref{fig:invgamma_tail}, the tail of the asymptotic popularity distribution still follows that of the connectivity distribution, but for small values of the popularity $v$, thus far from the tail, the trend is different compared to that displayed in Figure~\ref{fig:invgamma_tail-a1}.

\section{Conclusions}
\label{sect:conclusions}
In this paper, we have proposed two kinetic models describing, on one hand, a learning process leading individuals to build personal awareness about the reliability of news; and, on the other hand, the spread of fake news fostered by social media on the basis of the awareness of their users. We have formulated both models in terms of linear inelastic Boltzmann-type equations, proving in each case the existence and uniqueness of solutions and their trends towards equilibrium distributions, which we have been able to identify and characterise. Remarkably, all results refer to the genuinely collisional kinetic models without resorting to any limit description in special regimes of the model parameters.

Concerning the first model, we have assumed that the individuals increase or decrease their awareness against fake news depending on whether they are faced with news that they can or cannot detect as possibly partly unreliable. This model requires the distribution of the reliability of news as a fundamental input, a quantity that we have introduced mathematically in the abstract but that we have also indicated how to possibly infer from real data. We have shown that, despite the relative simplicity of the individual learning dynamics, the model possesses non-trivial equilibrium distributions in the form of clusters of awareness, which, under suitable assumptions, emerge in the long run regardless of the initial awareness distribution. This suggests a natural tendency of human societies to compartmentalise in awareness classes, whose number and distribution are controlled essentially by the rate of increase or decrease of individual awareness in the learning process. As a by-product, such a result supports the idea, widely used in the literature, that the spread of rumours in human societies might be described macroscopically by compartmental models inspired by the epidemiological ones.

Concerning instead the second model, we have assumed that the popularity that contents gain on social media depends on the ability of the users to ascertain their reliability and to decide, on such a basis, whether to repost them or not. In case of reposting, we have assumed furthermore that the effective penetration of a content is affected by the number of social connections of the reposting user. This second model requires as an input the awareness distribution returned by the previous model. As a simplifying working hypothesis, we have assumed that we could use the equilibrium awareness distribution, so as to decouple dynamically the two models. Such an assumption corresponds to the idea that the process of awareness formation acts as a background of the interactions among the social media users, being much quicker and persistent. More realistically, the two kinetic models should be coupled, so that the awareness formation depends on the contents a user interacts with on social media and that the content's popularity is simultaneously affected by the evolving awareness of the users. In our case, we have proved that the distribution of the popularity evolves towards a unique steady profile parameterised by the content's reliability, regardless of the initial popularity distribution. Moreover, we have characterised univocally such a steady distribution in terms of its Fourier transform and of its lower order statistical moments, namely the mean, energy, and variance. Out of them, we have shown that, on average, the popularity reached by sufficiently reliable news is invariably larger than that of poorly reliable news but also that partly fake contents reach typically the same large popularity as trustworthy ones. We have also proved that the tail of the popularity distribution is fully determined by that of the connectivity distribution, in particular that slim-/fat-tailed connectivity distributions entail slim-/fat-tailed popularity distributions, respectively, a reasonable result matching well with the intuitive expectation.

In this work, we have assumed for simplicity that the distribution of the connections of the users of the social network is fixed, viz. time-independent. A further direction of future research could involve considering \textit{co-evolving networks}, i.e. networks whose topology changes in time parallelly to the evolution of the popularity of news. This amounts to understanding the distribution of the connections as time-dependent, so as to capture the natural evolution of the social interactions over time.

\section*{Acknowledgements}
The authors gratefully acknowledge support from the Italian Ministry of University and Research (MUR) through the PRIN 2020 project No. 2020JLWP23 ``Integrated Mathematical Approaches to Socio-Epidemiological Dynamics'' and through the grant PRIN2022-PNRR project (No. P2022Z7ZAJ) “A Unitary Mathematical Framework for Modelling Muscular Dystrophies” (CUP: E53D23018070001) funded by the European Union - Next Generation EU.

The authors are members of GNFM (Gruppo Nazionale per la Fisica Matematica) of INdAM (Istituto Nazionale di Alta Matematica), Italy.

\appendix

\section{Existence and uniqueness of the solution to~\texorpdfstring{\eqref{eq:Boltzmann.x}}{}}
\label{sect:proof}

In this appendix, we prove the existence and uniqueness of the solution $f$ to~\eqref{eq:Boltzmann.x} with a prescribed initial condition supported in $[0,\,1]$. Notice that, owing to Proposition~\ref{prop:x'}, if $\supp{f^0}\subseteq [0,\,1]$ then $\supp{f(t)}\subseteq [0,\,1]$ for all $t>0$. Therefore, under the assumption of an initial datum supported in $[0,\,1]$, we may extend $f(t)$ to a probability measure defined on the whole $\R$ for all $t>0$, which vanishes outside the interval $[0,\,1]$. This is useful to identify a proper functional space of time-evolving probability measures on $\R$, designed to be complete with a metric built on the Fourier distance $d_s$, where to look for the solution of~\eqref{eq:Boltzmann.x} by means of the Banach fixed-point theorem.

For $s>0$, let
$$ \cP_s(\R):=\left\{\mu\in\cP(\R)\,:\,\int_{\R}\abs{x}^s\,d\mu(x)<+\infty\right\}. $$
Moreover, for $\gamma,\,C_{s+\gamma}>0$ let $\cP_{s,\gamma,C_{s+\gamma}}(\R)$ be the subset of $\cP_{s+\gamma}(\R)$ of probability measures with prescribed moments up to the order $[s]$ (the integer part of $s$) and such that $\int_{\R}\abs{x}^{s+\gamma}\,d\mu(x)\leq C_{s+\gamma}$, where $C_{s+\gamma}$ is independent of $\mu$. Then, based on~\cite[Proposition~2.6]{carrillo2007RMUP}, $\cP_{s,\gamma,C_{s+\gamma}}(\R)$ endowed with the metric $d_s$ is complete. 

If, for $0<s<1$, we choose $\gamma=1-s>0$ we obtain that the space $\cP_{s,1-s,C_1}(\R)$ of the probability measures in $\cP_1(\R)$ such that $\int_{\R}\abs{x}\,d\mu(x)$ is $\mu$-uniformly bounded by a constant $C_1>0$ is complete with the metric $d_s$. Notice that in $\cP_{s,1-s,C_1}(\R)$ the requirement of prescribed moments up to the order $[s]=0$ is satisfied straightforwardly.

The idea is to look for solutions of~\eqref{eq:Boltzmann.x} that at every $t>0$ belong to $\cP_{s,1-s,C_1}(\R)$. For this, however, we need to check preliminarily that solutions $f(t)$ of~\eqref{eq:Boltzmann.x} comply with the requirement of having $\int_{\R}\abs{x}f(x,t)\,dx$ uniformly bounded. Therefore, let $f=f(x,t)$ be any prospective solution to~\eqref{eq:Boltzmann.x}; setting $\varphi(x)=\abs{x}$ we obtain:
\begin{align*}
    \frac{d}{dt}\int_\R\abs{x}f(x,t)\,dx &= \int_\R\left[\int_0^{m_X}\left(\abs{x+\alpha(1-x)}-\abs{x}\right)g(y)\,dy\right. \\
    &\phantom{=} \left.+\int_{m_X}^1\left(\abs{x-\alpha x}-\abs{x}\right)g(y)\,dy\right]f(x,t)\,dx \\
    &= \int_\R\Bigl[\bigl(\abs{(1-\alpha)x+\alpha}-\abs{x}\bigr)G(m_X)-\alpha(1-G(m_X))\abs{x}\Bigr]f(x,t)\,dx \\
    &\leq \alpha\left(G(m_X)-\int_\R\abs{x}f(x,t)\,dx\right).
\end{align*}
Since $G(m_X)\leq 1$, this implies
$$ \int_\R\abs{x}f(x,t)\,dx\leq 1+\left(\int_\R\abs{x}f^0(x)\,dx-1\right)e^{-\alpha t}\leq 1, $$
where we have used the assumption $\supp{f^0}\subseteq [0,\,1]$ to get $\int_\R\abs{x}f^0(x)\,dx-1\leq 0$. Consequently, $\int_\R\abs{x}f(x,t)\,dx$ is uniformly bounded with respect to any prospective solution $f$ to~\eqref{eq:Boltzmann.x}; in particular, we can take $C_1=1$.

Owing to the completeness of $(\cP_{s,1-s,1}(\R),\,d_s)$, we have that also the space
$$ \pX:=C^0([0,\,T];\,\cP_{s,1-s,1}(\R)) $$
of time-continuous probability-measure-valued mappings equipped with the metric
$$ \varrho(f,g):=\sup_{t\in [0,\,T]}d_s(f(t),g(t)), \qquad f,\,g\in\pX $$
is complete for every $T>0$. In $(\pX,\,\varrho)$ we apply the Banach fixed-point theorem to prove:
\begin{theorem} \label{theo:exist_uniq}
Equation~\eqref{eq:Boltzmann.x} complemented with an initial condition $f^0\in\cP([0,\,1])$ admits a unique solution $f\in\pX$.
\end{theorem}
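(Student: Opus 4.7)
The plan is to apply Banach's fixed-point theorem in the complete metric space $(\pX,\varrho)$. The crucial preliminary observation is that the mean awareness $m_X(t)$ appearing in~\eqref{eq:lambda} can be \emph{decoupled} from the Boltzmann equation. Indeed, since $G$ is Lipschitz on $[0,\,1]$, the scalar ODE~\eqref{eq:mX} admits, by the classical Cauchy--Lipschitz theory, a unique solution $m_X\in C^1([0,\,T];\,[0,\,1])$ issuing from $m_X^0:=\int_\R xf^0(x)\,dx\in [0,\,1]$. Plugging this $m_X(t)$ into~\eqref{eq:x'}-\eqref{eq:lambda} turns~\eqref{eq:Boltzmann.x} into a \emph{linear} Boltzmann-type equation for $f$, which is the natural object to feed into a contraction argument.

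Next, I would introduce the mild (Duhamel) form of this linear equation by defining an operator $\pazocal{T}:\pX\to\pX$ through
$$
\int_\R\varphi(x)\,d(\pazocal{T}[f])(x,t)=e^{-t}\int_\R\varphi(x)f^0(x)\,dx+\int_0^t e^{-(t-\tau)}\int_\R\int_0^1\varphi(x')f(x,\tau)g(y)\,dy\,dx\,d\tau,
$$
for every bounded continuous $\varphi$, where $x'$ is given by~\eqref{eq:x'} with the prescribed $m_X(\tau)$. The first task is to verify $\pazocal{T}(\pX)\subseteq\pX$: taking $\varphi\equiv 1$ yields mass conservation; non-negativity of the integrand ensures positivity; Proposition~\ref{prop:x'} gives $\supp{\pazocal{T}[f](t)}\subseteq [0,\,1]$; the choice $\varphi(x)=\abs{x}$, combined with the estimate already carried out in the paragraph preceding this theorem, provides the uniform bound $\int_\R\abs{x}\,d(\pazocal{T}[f])(x,t)\leq 1$; finally, time-continuity with respect to $d_s$ follows immediately from the Duhamel structure.

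The contraction estimate is then the cleanest step. Taking the Fourier transform of the defining relation and using that $m_X(\tau)$ is the same for any two candidates $f_1,\,f_2\in\pX$ gives
$$
\widehat{\pazocal{T}[f_1]}(\xi,t)-\widehat{\pazocal{T}[f_2]}(\xi,t)=\int_0^t e^{-(t-\tau)}H(m_X(\tau),\xi)\bigl[\hat{f}_1((1-\alpha)\xi,\tau)-\hat{f}_2((1-\alpha)\xi,\tau)\bigr]\,d\tau,
$$
with $H$ as in~\eqref{eq:H}. Since $\abs{H(m_X,\xi)}\leq 1$ and, by the change of variable $\eta=(1-\alpha)\xi$,
$$
\frac{\abs{\hat{f}_1((1-\alpha)\xi,\tau)-\hat{f}_2((1-\alpha)\xi,\tau)}}{\abs{\xi}^s}\leq(1-\alpha)^s d_s(f_1(\tau),f_2(\tau)),
$$
dividing by $\abs{\xi}^s$ and taking suprema leads to
$$
\varrho(\pazocal{T}[f_1],\pazocal{T}[f_2])\leq(1-\alpha)^s(1-e^{-T})\varrho(f_1,f_2),
$$
which is a strict contraction for every $\alpha\in (0,\,1]$ and every $T>0$.

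Banach's fixed-point theorem then provides a unique $f\in\pX$ with $\pazocal{T}[f]=f$, which is precisely a solution of the linear version of~\eqref{eq:Boltzmann.x}. Consistency with the original nonlinear equation follows from a short closure argument: testing the fixed-point identity against $\varphi(x)=x$ shows that $t\mapsto\int_\R xf(x,t)\,dx$ satisfies the same Cauchy problem as $m_X(t)$, so the two coincide by uniqueness for~\eqref{eq:mX} and $f$ genuinely solves~\eqref{eq:Boltzmann.x}. Since $T>0$ is arbitrary, the solution extends globally in time. The main technical care is in ensuring that $\pazocal{T}$ actually lands in $\cP_{s,1-s,1}(\R)$ at each time (mass, positivity, support inside $[0,\,1]$, and the uniform first-moment bound), whereas the contraction step is essentially a direct reuse of the Fourier manipulations already performed in the proof of Theorem~\ref{theo:ds}.
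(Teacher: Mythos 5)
Your proposal is correct and follows essentially the same route as the paper's Appendix~A proof: the same Duhamel-form fixed-point operator (the paper's $Q$ is your $\pazocal{T}$), the same complete space $(\pX,\varrho)$ built on $\cP_{s,1-s,1}(\R)$, and the same contraction constant $(1-\alpha)^s(1-e^{-T})$, with the decoupling of $m_X$ that you make explicit via Cauchy--Lipschitz appearing in the paper as the remark that $m_X(\tau)$ in the operator depends only on $f^0$. Two minor observations: your explicit closure step (testing the fixed-point identity against $\varphi(x)=x$ and concluding $\int_\R xf(x,t)\,dx\equiv m_X(t)$ by uniqueness for the linear ODE $\dot n=-\alpha n+\alpha G(m_X(t))$) is a welcome addition that the paper compresses into that one-line remark, whereas the time-continuity of $t\mapsto\pazocal{T}[f](t)$ in $d_s$, which you declare immediate, is exactly where the paper invests a nontrivial computation (using $\abs{e^{-i\alpha\xi}-1}/\abs{\xi}^s\leq 2^{1-s}\alpha^s$, $\abs{H(m_X,\xi)}\leq 1$, and the scaling bound $(1-\alpha)^s d_s$), so it should be carried out rather than waved off.
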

\begin{proof}
We begin by observing that, multiplying both sides by $e^t$,~\eqref{eq:Boltzmann.x} may be rewritten as
$$ \frac{d}{dt}\left(e^t\int_\R\varphi(x)f(x,t)\,dx\right)=e^t\int_0^1\int_\R\varphi(x')f(x,t)g(y)\,dx\,dy, $$
whence, integrating in time on $[0,\,t]$, $0<t\leq T$, and taking the initial condition into account,
\begin{equation}
    \int_\R\varphi(x)f(x,t)\,dx=e^{-t}\int_\R\varphi(x)f^0(x)\,dx+\int_0^te^{\tau-t}\int_0^1\int_\R\varphi(x')f(x,\tau)g(y)\,dx\,dy\,d\tau.
    \label{eq:eq_con_phi}
\end{equation}
The right-hand side may be regarded as the weak form of an operator $Q$ such that
\begin{align}
    \begin{aligned}[b]
        \int_\R\varphi(x)Q(f)(x,t)\,dx &:=
            e^{-t}\int_\R\varphi(x)f^0(x)\,dx \\
        &\phantom{:=} +\int_0^te^{\tau-t}\int_0^1\int_\R\varphi(x')f(x,\tau)g(y)\,dx\,dy\,d\tau
    \end{aligned}
    \label{eq:Q}
\end{align}
for every observable quantity $\varphi$. Hence~\eqref{eq:eq_con_phi} may be recast in the form
$$ \int_\R\varphi(x)f(x,t)\,dx=\int_\R\varphi(x)Q(f)(x,t)\,dx, $$
which, owing to the arbitrariness of $\varphi$, shows that the solution $f$ is a fixed point of $Q$.

To apply Banach fixed-point theorem we now show that $Q$ maps $\pX$ into itself and that it is a contraction on $\pX$.

To see that $Q$ maps $\pX$ into itself, i.e. $Q(\pX)\subseteq\pX$, we prove that $Q(f)(\cdot,t)$ is a probability measure and that the mapping $t\mapsto Q(f)(t)$ is continuous for every $f\in\pX$. The first property follows straightforwardly from~\eqref{eq:Q} by observing that
$$ \int_\R\varphi(x)Q(f)(x,t)\,dx\geq 0 $$
for every non-negative $\varphi$, which indicates that $Q(f)\geq 0$ for every $f\in\pX$, and that for $\varphi\equiv 1$ it results
\begin{align*}
    \int_\R Q(f)(x,t)\,dx &= e^{-t}\int_\R f^0(x)\,dx+\int_0^te^{\tau-t}\int_0^1\int_\R f(x,\tau)g(y)\,dx\,dy\,d\tau \\
    &= e^{-t}+e^{-t}(e^t -1)=1.
\end{align*}
The continuity in time requires instead to check that $d_s(Q(f)(t_0),Q(f)(t))$ vanishes when $t\to t_0$ for an arbitrary $t_0\in [0,\,T]$. To evaluate the Fourier distance $d_s$ we compute first
$$ \widehat{Q(f)}(\xi,t)=e^{-t}\hat{f}^0(\xi)+\int_0^te^{\tau-t}H(m_X,\xi)\hat{f}((1-\alpha)\xi,\tau)\,d\tau, $$
which may be obtained from~\eqref{eq:Q} with $\varphi(x)=e^{-i\xi x}$. Notice that the term $m_X=m_X(\tau)$ at the right-hand side, which is the solution to~\eqref{eq:mX}, does not depend on $f$ but only on $f^0$ because the solutions to~\eqref{eq:mX} are univocally determined by the initial condition $m_X^0=\int_0^1xf^0(x)\,dx$. After some manipulations we get
\begin{align*}
    d_s(Q(f)(t_0),Q(f)(t)) &\leq \abs{e^{-t}-e^{-t_0}}\frac{\abs{\hat{f}^0(\xi)-\hat{f}((1-\alpha)\xi,t_0)}}{\abs{\xi}^s} \\
    &\phantom{\leq} +\frac{\abs{e^{-i\alpha\xi}-1}}{\abs{\xi}^s}\cdot\abs*{e^{-t}
        \int_0^{t}e^{\tau}G(m_X)\,d\tau-e^{-t_0}\int_0^{t_0}e^{\tau}G(m_X)\,d\tau} \\
    &\phantom{\leq} +\abs{e^{-t}-e^{-t_0}}
        \int_0^{t_0}e^\tau\frac{\abs{\hat{f}((1-\alpha)\xi,\tau)-\hat{f}((1-\alpha)\xi,t_0)}}{\abs{\xi}^s}\,d\tau \\
    &\phantom{\leq} +e^{-t}\int_{t_0}^{t}e^\tau\frac{\abs{\hat{f}((1-\alpha)\xi,\tau)-\hat{f}((1-\alpha)\xi,t_0)}}{\abs{\xi}^s}\,d\tau \\
    &\leq \abs{e^{-t}-e^{-t_0}}(2^{1-s}\alpha^s+(1-\alpha)^sd_s(f^0,f(t_0)) \\
    &\phantom{\leq} +2^{1-s}\alpha^s\abs*{e^{-t}\int_0^{t}e^{\tau}G(m_X)\,d\tau-e^{-t_0}\int_0^{t_0}e^{\tau}G(m_X)\,d\tau} \\
    &\phantom{\leq} +(1-\alpha)^s\abs{e^{-t}-e^{-t_0}}\int_0^{t_0}e^\tau d_s(f(\tau),f(t_0))\,d\tau \\
    &\phantom{\leq} +(1-\alpha)^se^{-t}\int_{t_0}^{t}e^\tau d_s(f(\tau),f(t_0))\,d\tau,
\end{align*}
whence we see that the right-hand side vanishes in the limit $t\to t_0$ due to the continuity of the exponential and integral functions. For completeness, we record that we have used the following estimates:
$$ \frac{\abs{e^{-i\alpha\xi}-1}}{\abs{\xi}^s}\leq 2^{1-s}\alpha^s, \qquad \abs{H(m_X,\xi)}\leq 1 $$
from the proof of Theorem~\ref{theo:ds}, and
\begin{align*}
    \frac{\abs{\hat{f}^0(\xi)-\hat{f}((1-\alpha)\xi,t_0)}}{\abs{\xi}^s} &= \frac{\abs{\hat{f}^0(\xi)-\hat{f}^0((1-\alpha)\xi)
        +\hat{f}^0((1-\alpha)\xi)-\hat{f}((1-\alpha)\xi,t_0)}}{\abs{\xi}^s} \\
    &\leq\frac{\abs{\hat{f}^0(\xi)-\hat{f}^0((1-\alpha)\xi)}}{\abs{\xi}^s}
        +\frac{\abs{\hat{f}^0((1-\alpha)\xi)-\hat{f}((1-\alpha)\xi,t_0)}}{\abs{\xi}^s}
\end{align*}
with in particular
\begin{align*}
    \frac{\abs{\hat{f}^0(\xi)-\hat{f}^0((1-\alpha)\xi)}}{\abs{\xi}^s} &=
        \frac{1}{\abs{\xi}^s}\abs*{\int_0^1f^0(x)(e^{-i\xi x}-e^{-i(1-\alpha)\xi x})\,dx} \\
    &\leq \frac{1}{\abs{\xi}^s}\int_0^1f^0(x)\abs*{e^{-i\xi x}-e^{-i(1-\alpha)\xi x}}\,dx \\
    &= \int_0^1f^0(x)\abs{e^{-i\xi x}}\cdot\abs{x}^s\frac{\abs{1-e^{i\alpha\xi x}}}{\abs{\xi x}^s}\,dx \\
    &\leq 2^{1-s}\alpha^s\int_0^1f^0(x)\abs{x}^s\,dx \\
    &\leq 2^{1-s}\alpha^s.
\end{align*}
and
\begin{align*}
    \frac{\abs{\hat{f}^0((1-\alpha)\xi)-\hat{f}((1-\alpha)\xi,t_0)}}{\abs{\xi}^s} &=
        (1-\alpha)^s\frac{\abs{\hat{f}^0((1-\alpha)\xi)-\hat{f}((1-\alpha)\xi,t_0)}}{\abs{(1-\alpha)\xi}^s} & \text{(set $\eta:=(1-\alpha)\xi$)} \\
    &\leq (1-\alpha)^s\sup_{\eta\in\R\setminus\{0\}}\frac{\abs{\hat{f}^0(\eta)-\hat{f}(\eta,t_0)}}{\abs{\eta}^s} \\
    &= (1-\alpha)^sd_s(f^0,f(t_0)).
\end{align*}

Finally, to see that $Q$ is a contraction on $\pX$ we consider $f_1,\,f_2\in\pX$ and study
\begin{align*}
    \frac{\abs{\widehat{Q(f_2)}(\xi,t)-\widehat{Q(f_1)}(\xi,t)}}{\abs{\xi}^s} &\leq
        \int_0^te^{\tau-t}\abs{H(m_X,\xi)}\frac{\abs{\hat{f}_2((1-\alpha)\xi,\tau)-\hat{f}_1((1-\alpha)\xi,\tau)}}{\abs{\xi}^s}\,d\tau \\
    &\leq (1-\alpha)^s\int_0^te^{\tau-t}d_s(f_1(\tau),f_2(\tau))\,d\tau \\
    &\leq (1-\alpha)^s(1-e^{-t})\rho(f_1,f_2),
\end{align*}
therefore
$$ \varrho(Q(f_1),Q(f_2))\leq (1-\alpha)^s(1-e^{-T})\varrho(f_1,f_2). $$
Since $0\leq\alpha<1$ and $T>0$, we deduce that $(1-\alpha)^s(1-e^{-T})<1$ for all $T>0$, hence that $Q$ is a contraction on $\pX$ for every $T>0$.
\end{proof}
	
\bibliographystyle{plain}
\bibliography{FmLnTa-fake_news}
\end{document}